\newtheorem{theorem}{Theorem} 
\newtheorem{lemma}{Lemma} 
\newtheorem{definition}{Definition} 
\def\mod{\ensuremath{{\rm \ mod\ }}}
\def\verts{\ensuremath \mathscr{V}}
\def\arcs{\ensuremath \mathscr{A}}
\def\Gr{\ensuremath \mathscr{G}}
\def\iset{\ensuremath \mathcal{I}}
\newcommand{\matg}[1]{{\ensuremath \boldsymbol{#1}}}
\def\figwidth{0.65\textwidth}
\begin{document}
\sloppy 

\title{Near-Optimal Distributed Scheduling Algorithms for Regular Wireless Sensor Networks}

\author{K. Shashi Prabh\\
School of Engineering\\ Polytechnic Institute of Porto, Portugal}
\date{}

\maketitle

\begin{abstract}
Wireless sensor networks are normally characterized by resource challenged nodes. Since communication costs the most in terms of energy in these networks, minimizing this overhead is important. We consider minimum length node scheduling  in regular multi-hop wireless sensor networks. We present collision-free decentralized scheduling algorithms based on TDMA with spatial reuse that do not use message passing, this saving communication overhead.  We develop the algorithms using graph-based $k$-hop interference model and show that the schedule complexity in regular networks is independent of the number of nodes and varies quadratically with $k$ which is typically a very small number. We follow it by characterizing feasibility regions in the SINR parameter space where the constant complexity continues to hold while simultaneously satisfying the SINR criteria. Using simulation, we evaluate the efficiency of our solution on random network deployments.
\end{abstract}


\section{Introduction}\label{intro}

We consider the problem of deterministic collision-free channel access in regular Wireless Sensor Networks (WSN). WSNs  typically consist of resource challenged nodes where minimizing the communication overhead is important. In pure TDMA, only one node can transmit at a time. However, in wireless networks  deployed over large enough geographical areas,  multiple transmissions can take place in a single time slot due to fading. This scheduling method is referred to as TDMA with spatial reuse, or, STDMA~\cite{NK85_STDMA}. In this paper, we propose a STDMA-based scheduling method without message passing that satisfies the SINR criteria (defined below) at all receivers.

We (a) show that schedule complexity in hexagonal and square-grid networks in graph-based $k$-hop interference model (defined below) is independent of $n$ and is quadratic in $k$, and (b) specify feasibility regions in the SINR parameter space where the constant complexity continues to hold in the SINR model. In the SINR analysis, $k$, which determines the schedule, becomes a free parameter. It has been pointed out that analysis of regular topology networks are useful since, for general networks, the analytical outcome is limited to asymptotics and order computations. On the other hand, analysis of regular networks gives expressions for leading coefficient besides the scaling law and sheds light on the asymptotics of general topology analysis~\cite{MT05_reg_wirel_netw}. The analysis developed in this paper is exact.

In the primary interference model, transmissions between node pairs $(s, t)$ and $(u,v)$ can not take place simultaneously if $\{s,t\} \cap \{u,v\} \neq \phi$. In practice, successful reception depends on the exclusion of a larger set of links from transmitting simultaneously than just those specified by the primary interference model. More accurate graph-based interference models also exclude the nodes that fall within certain proximity of a link from transmitting simultaneously. In the $k$-hop interference model, which is used in this work, the minimum separation between a receiver and simultaneous transmitters is required to be $k$ hops. Hajek and Sasaki showed that the complexity of finding the minimum length schedule is strongly polynomial under primary interference model~\cite{HS88_link_sched_poly}.  However, the minimum length scheduling problem has been shown to be intractable in general graph-based interference models ($k > 1$)~\cite{SV82_complexity_max_matching, SMS06_wless_sched_complexity}. In this work, we present constant complexity algorithms that achieve approximation ratio of 4/3 for hexagonal networks and 5/4 for square-grid networks. Our algorithm for square-grid networks is optimal for odd $k$.

SINR is a more accurate model for successful packet reception than the graph-based models. Let $P_i$ denote the transmit power of node $i$, $G_{ij}$ denote the channel gain between nodes $i$ and $j$, $\eta_r$ denote noise at node $r$ and $\verts$ denote the set of all nodes. In the SINR model, transmissions from $s$ to $r$ are successful if:
\begin{eqnarray}
{\rm SINR}_{sr} &\triangleq& \frac{\rm Signal}{\rm Interference + Noise}  \nonumber\\
&=& \frac{P_s G_{sr}}{ \sum_{i \neq s; i \in \verts}P_i G_{ir} + \eta_r } \ge \beta.\label{eq:sinr}
\end{eqnarray}
A commonly used channel gain model is a polynomially decaying one with the Euclidean distance, $G_{sr} = d(s,r)^{-\gamma}$, where $d(i,j)$ is the distance between nodes $i$ and $j$, and $\gamma$ is the path-loss exponent.\footnote{The value of the path-loss exponent, usually between 2 (free space) and 6, depends on the physical environment~\cite{Rappaprt01_wless_book}. For typical sensor network deployments,  $\gamma$ has been found to be around 3 -- 4.} Thus, whether a given pair of nodes may be able to communicate with each other depends on {\em all\/} other transmitters that are active simultaneously. The notion of connectivity becomes dynamic, and as a result, some of the algorithmic conveniences offered by the graph-based models do not exist anymore~\cite{Peleg_sinr_maps, KLPP11_sinr_wless_topology}. 

In this paper, we develop $k$-hop interference model based near throughput-optimal scheduling algorithms and subsequently derive conditions on the SINR parameters in order to satisfy the SINR criteria at every receiver. Schedule complexity refers to the shortest schedule length that satisfies the traffic demands. The tightest bound on collision-free SINR schedule complexity obtained so far is $I_{in} O(\log^2n)$, where $I_{in} \in \Omega(\log n)$ is a measure of interference and $n$ is the number of nodes~\cite{MWZ06_complexity_sinr}. The quantity $I_{in}$ can be as large as $n$ making this bound trivial. Using our scheduling method gives SINR schedule complexity  independent of $n$.

In this paper, we make the following contributions: we present collision-free distributed scheduling algorithms for node scheduling problems in multi-hop hexagonal and square-grid networks. In the $k$-hop interference model, these algorithms achieve approximation ratios of 3/4 and 5/4 for hexagonal and square-grid networks, respectively. The scheduling is done based on logical address of nodes and  does not use message passing. We show that the schedule complexity depends on $k$ but not on the number of nodes in the network.  We establish feasibility regions where the SINR criteria is also satisfied, thereby establishing constant (in $n$) SINR schedule complexity within these feasible regions. We evaluate our work using simulations.

The rest of this paper is organized as follows: we present model and problem formulation in Section~\ref{sec:prelim}, followed by distributed node scheduling algorithms and schedule complexity analysis in Section~\ref{sec:nodealgo}. We consider SINR scheduling in Section~\ref{sec:sinr}. We present simulation results in Section~\ref{sec:eval}. We present related work in Section~\ref{sec:rel-work} and conclude the paper with Section~\ref{sec:concl}.

\section{Preliminaries and Model}\label{sec:prelim}

We consider two kinds of  regular networks, viz., hexagonal and square-grid.  Some authors use the term ``regular wireless networks'' to refer to a stricter assumption of the regularity of node placement, where the euclidean distance between neighboring nodes is a constant, {\em and\/} of the regularity of connectivity, where each node is connected to a constant (6 and 4 resp.) number of neighbors~\cite{MT05_reg_wirel_netw}. In this paper, we assume regularity with respect to connectivity only~(\figurename~\ref{fig:hexnet}). In the following, we use figures showing the lattice representations (\figurename~\ref{fig:hexlatt}--\ref{fig:sqlatt}) for the convenience of presentation only.

\begin{figure}[!h]
\centering 
\subfigure[Hexagonal networks: physical representation] 
{
\label{fig:hexnet}
\includegraphics[width=1.4in]{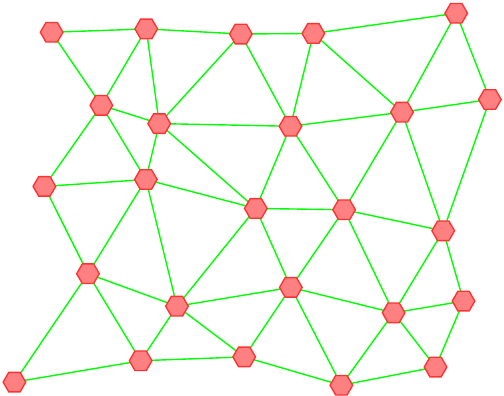}
}
  \hspace{0.05\columnwidth}
\subfigure[Hexagonal networks: lattice representation] 
{
\label{fig:hexlatt}
\includegraphics[width=1.25in]{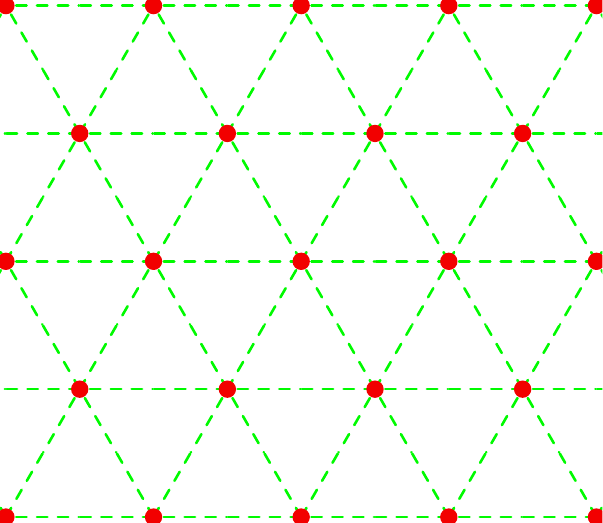}
}
  \hspace{0.05\columnwidth}
\subfigure[Square-grid networks: lattice representation] 
{
  \label{fig:sqlatt}
  \includegraphics[width=1.1in]{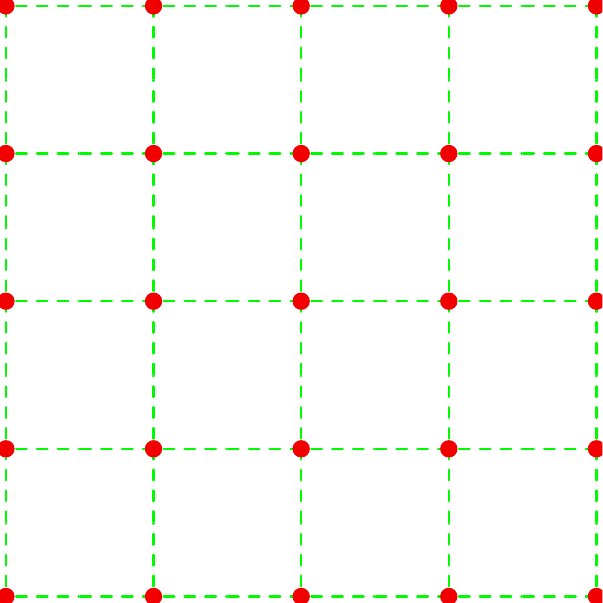}
}
\caption{Physical and logical representations}
\label{fig:lattices} 
\end{figure}

The coordinates of  nodes in  lattice representations are convenient for addressing. In the case of hexagonal lattice, a few coordinate systems have been used in the literature. In this paper, we choose two diagonals oriented at an angle of $2\pi/3$ as the  $X$ and $Y$ axes~(\figurename~\ref{fig:hexaxis}).  The corresponding addressing is shown in~\figurename~\ref{fig:hexcoord}. We refer to the six regions enclosed by the three diagonals as {\em hextants\/}. In square-grid networks, addresses of the four neighbors of $(i,j)$ are $(i \pm 1, j \pm 1)$.

\begin{figure}[!h]
\centering 
\subfigure[] 
{
\label{fig:hexaxis}
\includegraphics[width=1.6in]{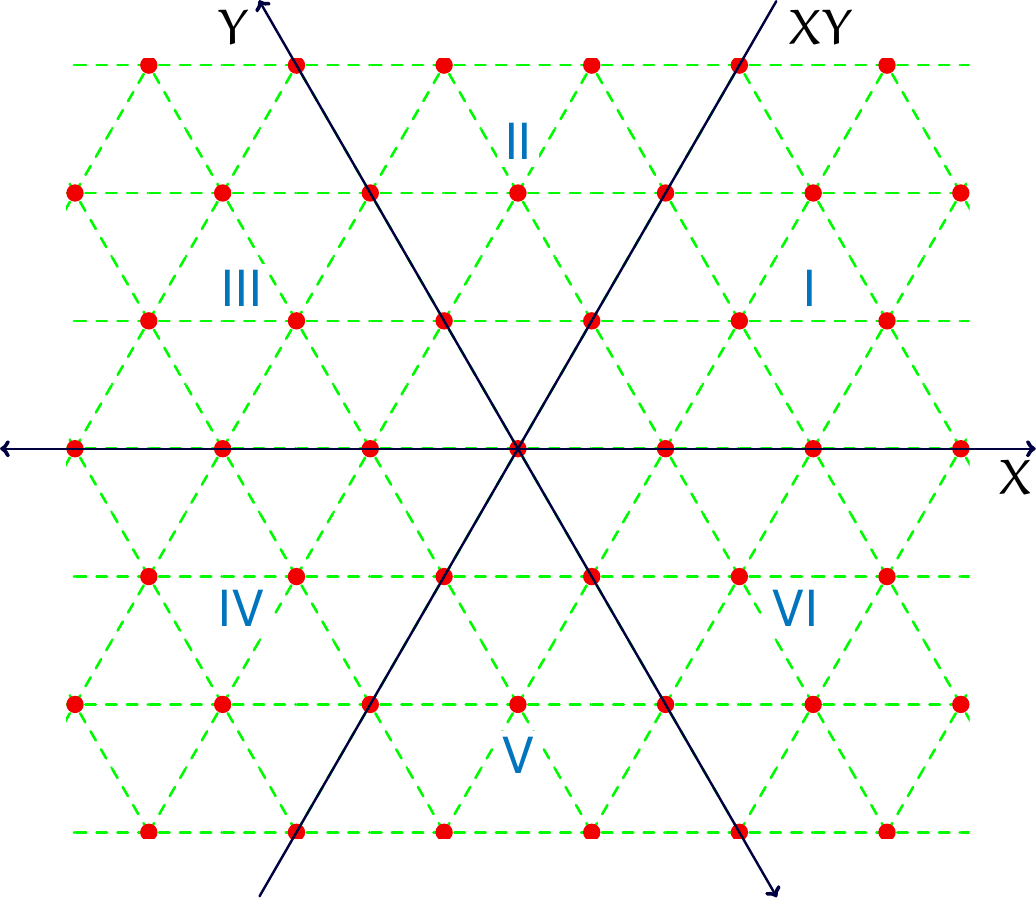}
}
\hspace{0.05\columnwidth}
\subfigure[] 
{
  \label{fig:hexcoord}
  \includegraphics[width=2in]{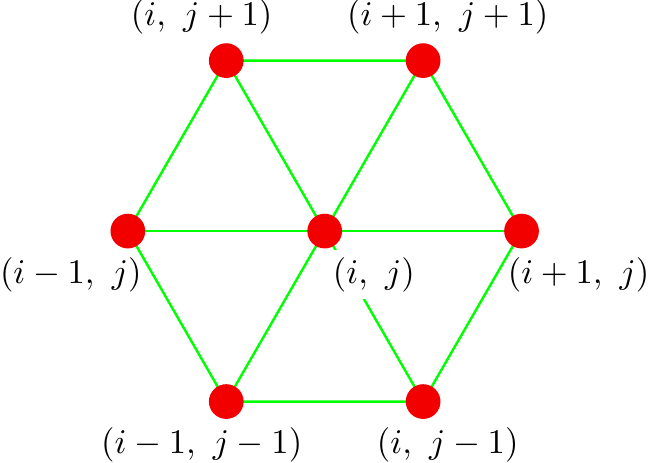}
}
\caption{Hexagonal coordinate system. }
\label{fig:hexcoordall} 
\end{figure}

\begin{lemma}\label{lem:hexdist}
The graph distance (i.e., the minimum number of edges) between points $(x_1, y_1)$ and $(x_2, y_2)$ in hexagonal lattice is given by $MAX\{|dx|,|dy|,|dx-dy|\}$, where $dx = x_2 - x_1$ and $dy = y_2 - y_1$.
\end{lemma}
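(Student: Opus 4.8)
The plan is to prove the two inequalities separately: that every walk between the two points has length at least $\max\{|dx|,|dy|,|dx-dy|\}$ (the lower bound), and that a walk of exactly this length exists (the upper bound). Before either step I would record the six unit moves available in the chosen coordinate system. Since the $X$ and $Y$ axes are two of the three lattice diagonals meeting at $2\pi/3$, the third diagonal points in the direction $X+Y$, so the neighbours of $(i,j)$ are $(i\pm 1,j)$, $(i,j\pm 1)$, and $(i\pm 1,j\pm 1)$.

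For the lower bound, consider the three integer-valued functions $f_1(x,y)=x$, $f_2(x,y)=y$, and $f_3(x,y)=x-y$. Checking the six moves above, each of $f_1,f_2,f_3$ changes by at most $1$ in absolute value along any single edge (the diagonal move changes $f_3$ by $0$, the two axial moves change exactly one of $f_1,f_2$ by $1$ and $f_3$ by $\pm 1$). Hence along any walk of $\ell$ edges from $(x_1,y_1)$ to $(x_2,y_2)$ we have $|f_i(x_2,y_2)-f_i(x_1,y_1)|\le \ell$ for $i=1,2,3$, i.e. $\ell \ge \max\{|dx|,|dy|,|dx-dy|\}$, so in particular the graph distance is at least this quantity.

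For the upper bound I would exhibit a walk of exactly that length. The lattice is vertex-transitive, so I may translate the source to the origin; the target is then $(dx,dy)$. The three diagonals through the origin cut the plane into six hextants, and the automorphism group of the lattice fixing the origin -- generated by $(x,y)\mapsto(-x,-y)$, $(x,y)\mapsto(y,x)$, and the order-$6$ rotation $(x,y)\mapsto(x-y,x)$, each of which permutes the six unit moves and permutes the set $\{|dx|,|dy|,|dx-dy|\}$ -- acts transitively on the six hextants. It therefore suffices to treat one hextant, say $0\le dy\le dx$, where $\max\{|dx|,|dy|,|dx-dy|\}=dx$: the walk consisting of $dy$ diagonal moves $(1,1)$ followed by $dx-dy$ horizontal moves $(1,0)$ reaches $(dx,dy)$ in $dy+(dx-dy)=dx$ steps. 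Combined with the lower bound this is optimal, and transporting the walk back by the automorphism settles the general case.

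The argument is essentially bookkeeping, and I expect no real difficulty; the only points needing care are (i) reading off the correct set of six moves from the $2\pi/3$-axes convention, since an error here would turn $|dx-dy|$ into $|dx+dy|$, and (ii) making sure the hextant reduction is genuinely exhaustive. If one prefers to avoid the symmetry language, (ii) can instead be done by a direct case split: when $dx,dy$ have the same sign the distance is $\max\{|dx|,|dy|\}$, realised by diagonal moves together with one kind of axial move; when they have opposite signs it is $|dx|+|dy|=|dx-dy|$, realised by the two kinds of axial moves; and the degenerate cases in which $dx$, $dy$, or $dx-dy$ vanishes are checked along the way.
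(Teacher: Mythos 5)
Your proposal is correct, but it follows a genuinely different route from the paper's own proof. The paper proves the lemma by a single induction on the graph distance from the origin: the base case lists the six unit moves $(1,0),(1,1),(0,1),(-1,0),(-1,-1),(0,-1)$, and the inductive step argues that any neighbour $(x',y')$ of a shell point $(x,y)$ with $MAX\{|x|,|y|,|x-y|\}=n$ that lies one step farther out has at least one of $|x'|,|y'|$ increased by one while $|x'-y'|=|x-y|\le n$ when both coordinates change, so the maximum becomes exactly $n+1$. You instead split the claim into two inequalities: a lower bound from the observation that each of the functions $x$, $y$, $x-y$ changes by at most one along any edge, and an upper bound from an explicit walk ($dy$ diagonal steps plus $dx-dy$ axial steps after reducing to the hextant $0\le dy\le dx$, or equivalently your same-sign/opposite-sign case split). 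Your decomposition makes the optimality argument fully explicit -- the Lipschitz-style bound does cleanly what the paper's inductive step asserts rather tersely (that one edge can raise the maximum by at most one), and your construction additionally exhibits shortest paths, which is useful beyond the distance formula -- while the paper's induction is more compact and handles both directions in one mechanism. The only point to keep tidy is the one you already flagged: with the paper's $2\pi/3$ axes the diagonal neighbours are $(i+1,j+1)$ and $(i-1,j-1)$ only (matching signs), which is exactly what your edge-by-edge check of $x-y$ and your hextant construction rely on.
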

A variation of Lemma~\ref{lem:hexdist} appears in~\cite{DS05_3d_hex}. The difference in the expression for distance in this paper and in~\cite{DS05_3d_hex} is due to different coordinate system choices. A proof of Lemma~\ref{lem:hexdist} appears in the appendix.

\begin{lemma}\label{lem:sqdist}
The graph distance ($L_1$ norm) between points, $(x_1, y_1)$ and $(x_2, y_2)$ in square lattice is given by $|dx|+|dy|$, where $dx = x_1 - x_2$ and $dy = y_1 - y_2$.
\end{lemma}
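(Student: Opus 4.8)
The statement is the standard taxicab-metric identity for the grid graph, so the plan is to establish the two inequalities $d\big((x_1,y_1),(x_2,y_2)\big)\ge |dx|+|dy|$ and $d\big((x_1,y_1),(x_2,y_2)\big)\le |dx|+|dy|$ separately, where $d$ denotes graph distance. Throughout I read the square-grid adjacency as the axis-aligned one, each node $(i,j)$ being adjacent to $(i\pm 1,j)$ and $(i,j\pm 1)$; this is the reading consistent with the ``$L_1$ norm'' asserted in the statement, and under it the network is isomorphic to the grid graph on $\mathbb{Z}^2$ (restricted to the deployment rectangle).

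For the lower bound, the key observation is that every edge of the square grid changes exactly one coordinate, and by exactly $\pm 1$; call such an edge \emph{horizontal} or \emph{vertical} accordingly. Fix an arbitrary walk from $(x_1,y_1)$ to $(x_2,y_2)$ and let $h$ and $v$ be the numbers of horizontal and vertical edges it uses. The net change of the first coordinate along the walk is a sum of $h$ summands each equal to $\pm 1$, hence has absolute value at most $h$; since that net change equals $x_2-x_1$ and $|x_2-x_1|=|dx|$, we get $h\ge |dx|$, and symmetrically $v\ge |dy|$. Thus the walk has length $h+v\ge |dx|+|dy|$, and since this holds for every walk between the two points — in particular for a shortest one — the graph distance is at least $|dx|+|dy|$.

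For the upper bound I exhibit a walk of exactly that length: starting from $(x_1,y_1)$, take $|dx|$ horizontal steps in the direction of $x_2$ to reach $(x_2,y_1)$, then $|dy|$ vertical steps in the direction of $y_2$ to reach $(x_2,y_2)$. This ``L-shaped'' walk uses $|dx|+|dy|$ edges, giving $d\big((x_1,y_1),(x_2,y_2)\big)\le |dx|+|dy|$; combined with the lower bound, equality follows. I expect essentially no obstacle here — the only point meriting a word is that the intermediate nodes of the L-shaped walk (those whose first coordinate lies between $x_1$ and $x_2$ and whose second coordinate is $y_1$ or $y_2$) all belong to the network, which holds for the rectangular grid spanned by the two endpoints (and trivially for the infinite lattice).
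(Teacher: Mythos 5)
Your proof is correct and complete. Note, however, that the paper itself offers no proof of this lemma at all: only the hexagonal analogue (Lemma~\ref{lem:hexdist}) is proved, in the appendix, by translating the origin and inducting on the graph distance, while the square-lattice statement is left as a standard fact. So your argument is not a variant of the paper's proof but a replacement for a missing one, and it is a sound, self-contained choice: the lower bound comes from the observation that each edge changes exactly one coordinate by $\pm 1$, so any walk uses at least $|dx|$ horizontal and $|dy|$ vertical edges, and the upper bound comes from the explicit L-shaped walk of length $|dx|+|dy|$. Had you wanted to parallel the paper, you could instead have mimicked the appendix's induction on distance from the origin, which treats both lattices uniformly but is less direct; your two-inequality argument buys a shorter and more transparent proof, and it transfers verbatim to the finite deployment because the L-shaped walk stays inside the rectangle spanned by the endpoints, while the edge-counting lower bound only improves when passing to a subgraph. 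You also correctly resolve the paper's ambiguous phrasing that the four neighbors of $(i,j)$ are ``$(i\pm 1, j\pm 1)$'': the axis-aligned adjacency you adopt is the one actually used elsewhere in the paper (the receivers of a transmitter at the origin are taken to be $(\pm 1,0)$ and $(0,\pm 1)$, and $\matg{N_1}(0,0)=\{(0,1),(1,0)\}$ in the correctness proof), and it is the only reading under which the claimed $L_1$ formula holds, since with diagonal adjacency the lattice would even be disconnected.
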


\subsection{Problem Formulation}\label{sec:prob}

The \textbf{\em $k$-hop interference model\/} is defined as follows: a transmission is successful if the shortest hop distance between the receiver and all concurrent transmitters is greater than or equal to $k$ (note that for $k=1$, primary interference constraints need to be taken into account explicitly).  

Let  $\Gr=(\verts,\arcs)$ be a directed graph, where $\verts$  and $\arcs$ represent the set of nodes and arcs respectively. A subset $a \subset \arcs$ is called a {\em matching\/} if all the links in $a$ can be active simultaneously. A node schedule consists of a discrete sequence of set of nodes $\mathcal{N}_1, \ldots, \mathcal{N}_m$ where every combination of one arbitrary link from each node in  $\mathcal{N}_i$ forms a matching. Let $\iset$ be a finite index set. The schedule is defined as $\mathcal{S} = (\mathcal{N}_i,\lambda^i | i \in \iset)$ where $\lambda^i$ is the duration of time slot $i$.  
Let
\begin{eqnarray} 
\chi_{v}^i   & = & 1  ~~\text{if}~~ v \in \mathcal{N}_i\nonumber\\
             & = & 0,  ~~\text{otherwise.} 
\end{eqnarray}
%
%
The Minimum-Length Node Scheduling Problem {\bf [NSP]} is defined as:
\begin{eqnarray} 
\text{Minimize}   &~~ \tau = \sum_i  \lambda^i\nonumber\\
\text{subj. to}   &~~ \sum_i   \chi_{v}^i \lambda^i  \ge 1,~~  \forall v \in \verts.\label{nsp:cond}
\end{eqnarray}

\subsection{Basis Lattice Section}

Intuitively, a basis lattice section is a geometric region, replication of which covers the entire lattice without overlapping.
\begin{definition}[Basis Lattice Section]
A  section $\psi$ of a lattice is defined to be basis lattice section if linear translations of the replicas of $\psi$ tessellate the lattice, satisfying the following two conditions:
\begin{enumerate}
\item Coverage: Every lattice point lies within some replica of $\psi$.
\item Uniqueness: No lattice point lies within more than one replica of $\psi$.
\end{enumerate}
\end{definition}
For example, a shaded rhombus in \figurename~\ref{fig:hex:prim:2hop} and a shaded rectangle in \figurename~\ref{fig:sq:prim:3hop} are basis lattice sections of the hexagonal and square lattices, respectively. Similar ideas have been used in the literature, particularly in cellular networks (see e.g.,~\cite{FDTT07_wless_capacity_percolation}). An important difference, however, is that in this work  {\em we are not tessellating the physical space.}

\subsubsection*{Notations and conventions}

\begin{itemize}
\item Let a basis lattice section $\psi$ contain a lattice point $p$. $\matg{N_l}(p)$ denotes the set of lattice points in $\psi$ at graph distance $l$ from $p$ :   $\matg{N_l}(p) = \{ q \in \psi ~\vert~ d(p,q) = l\}$, where $d(p,q)$ denotes the graph distance between $p$ and $q$.

\item Let $p \in \psi$ be the lattice point with the smallest coordinate in every dimension. Then we call $p$ as the origin of $\psi$, and use $p$ to refer to $\psi$. For example, we refer to a quadrilateral basis lattice section whose corner points are at $(0,0), (i,0), (i, i), (0,i);~ i > 0$ as ``the basis section at the origin.''

\item Let $p \in \psi$. Then $\matg{S}(x,y)$ denotes the set of lattice points $\{q\}$ such that the relative coordinates of $q$ (w.r.t. the basis lattice section containing $q$) are the same as that of $p$. For example,  the elements of $\matg{S}(0,0)$ are shown with circled dots in \figurename~\ref{fig:hex:arb:sched}.
\end{itemize}

\begin{figure}[h]
\centering
\subfigure[Rhombus basis section.]
{
\includegraphics[height=1.8in]{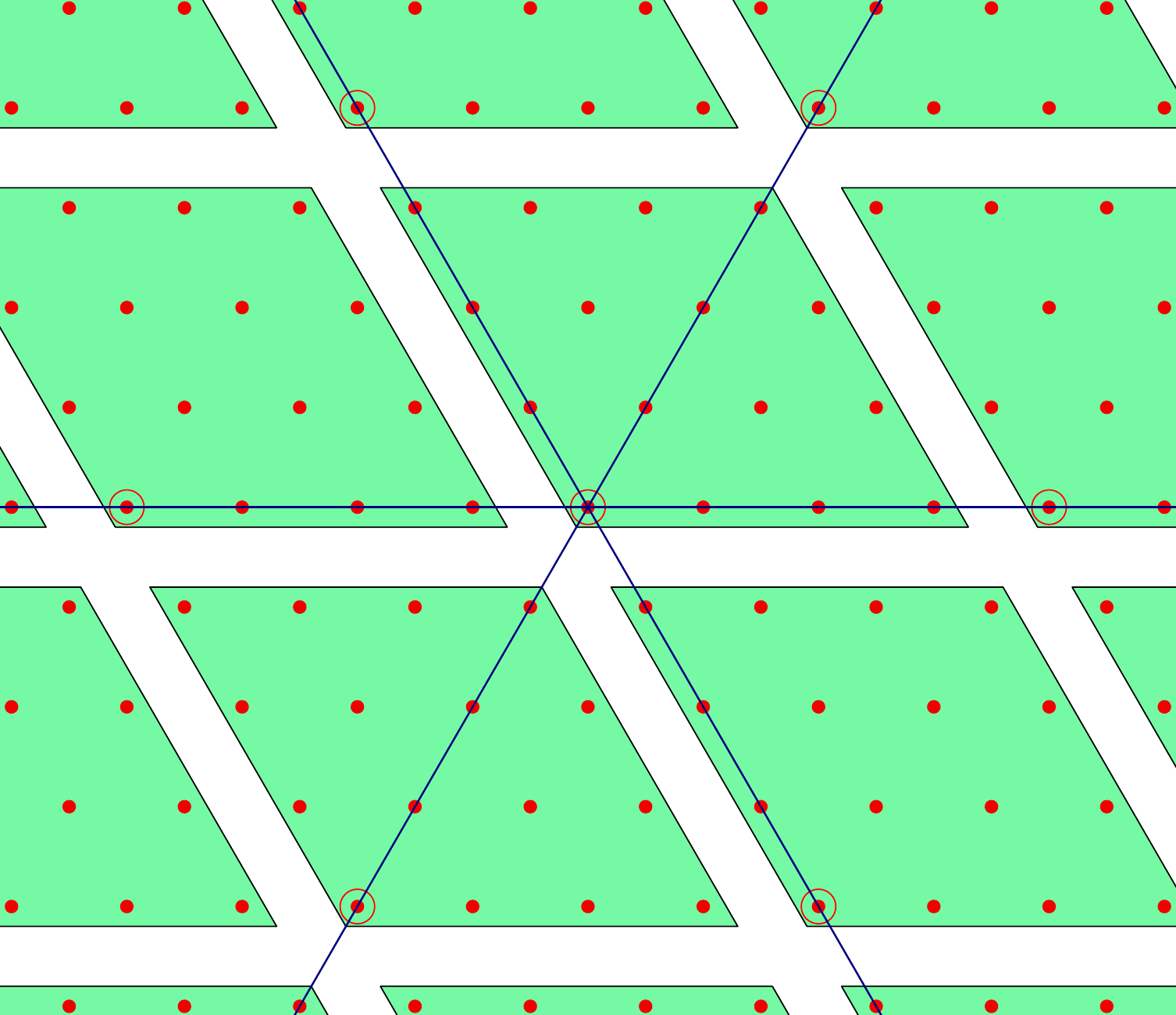}
\label{fig:hex:prim:2hop}
}
 \hspace{0.05\columnwidth}
\subfigure[Rhomboid basis sections.]
{
  \label{fig:sq:prim:3hop}
  \includegraphics[height=1.7in]{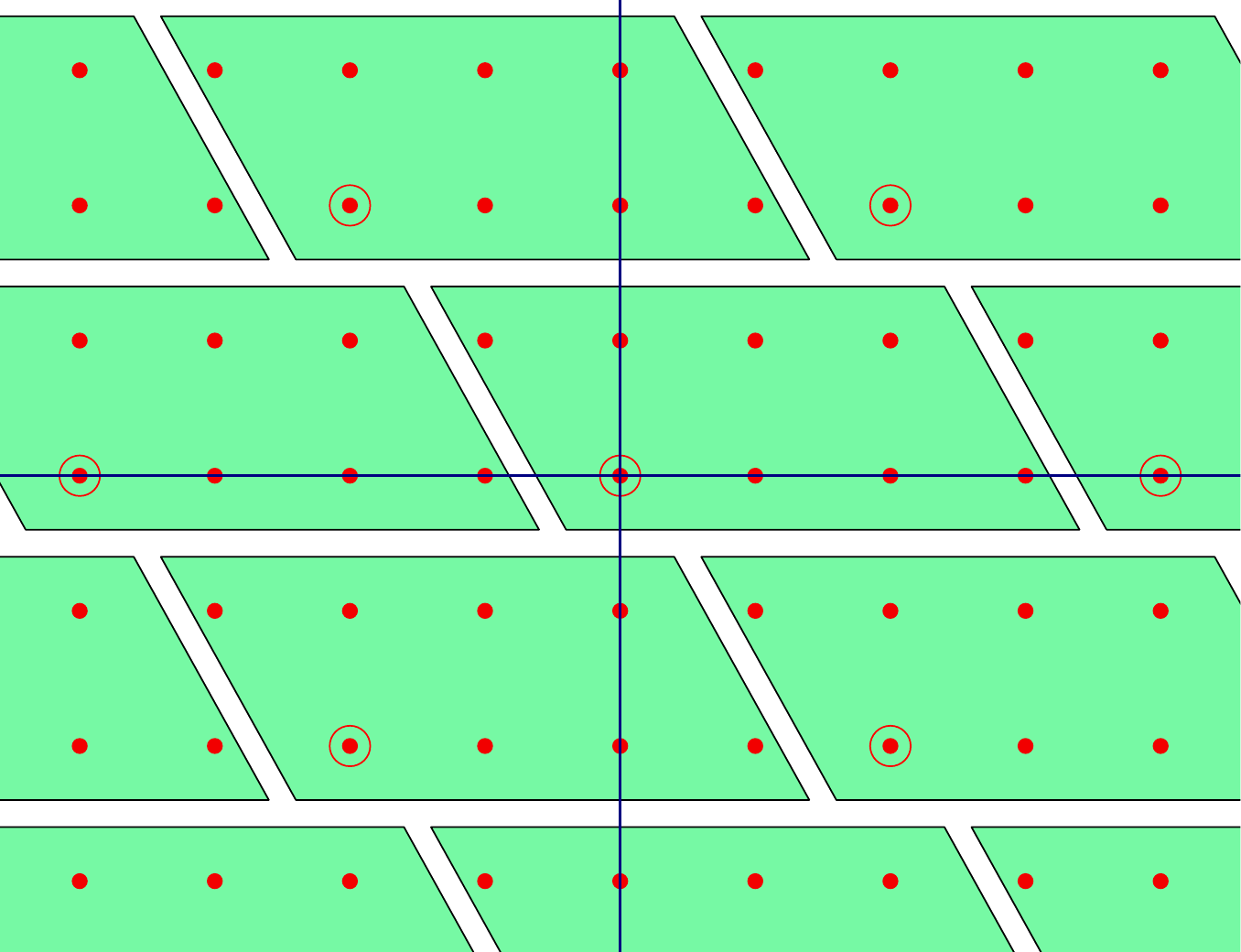}
}
\caption{Basis lattice sections of hexagonal and square lattices. The circled points represent one set of concurrent transmitters for  3-hop interference model.}
\label{fig:hex:arb:sched} 
\end{figure}

\section{Scheduling Algorithms}\label{sec:nodealgo}
In this section we present node scheduling algorithms and establish their approximation ratios. Our approach consists of determining a maximal set of nodes that can be scheduled simultaneously along-with a suitable basis lattice section where every basis lattice section contains one scheduled node. The number of lattice points contained in the basis section gives the schedule length. We use the clique number of interference graphs to analyze the optimality of our algorithms.  

\subsection{Hexagonal Networks}
Consider a rhombus of $i$ edges drawn parallel to the two axes whose corner points are $(0,0), (i,0), (i, i)$ and  $(0,i)$ (\figurename~\ref{fig:hex:prim:2hop}).  Clearly, this rhombus satisfies the conditions for a basis lattice section.

\begin{definition}[Rhombus Basis Section]
In the following, we refer to the basis lattice section of hexagonal lattice whose corner points are of the form $(x,y), (x+i,y), (x+i, y+i)$ and $(x,y+i)$ as rhombus basis section.
\end{definition}

{\bf Scheduling algorithm.~} Algorithm~\ref{hx-nack-algo1} solves NSP for hexagonal networks in $(k+1)^2$ slots. Starting with 
$\{(0,0)\}\cup\matg{S}(0,0)$, it progressively schedules nodes in $\{(k,k)\}\cup\matg{S}(k,k)$, thus covering all nodes in the network. 
 An illustration of the slot sequence for the 3-hop interference case is shown in \figurename~\ref{fig:hex:arb:sched1}. Observe that the scheduling decision in a given slot is solely based on the (lattice) address of a node and hence does not involve any message passing.

\begin{algorithm}
  \caption{Node scheduling algorithm for hexagonal networks}
  \label{hx-nack-algo1}
  \begin{algorithmic}[1]
    \REQUIRE $(x,y)$ \COMMENT{Node address}
    \REQUIRE $k$ \COMMENT{Interference model} 
    \STATE $u \leftarrow x \mod (k+1)$
    \STATE $v \leftarrow y \mod (k+1)$
    \ENSURE $t(x,y) \leftarrow u + (k+1) v$ \COMMENT {Tx slot}  
  \end{algorithmic}
\end{algorithm}
\subsubsection* {Correctness proof.}
\begin{lemma}\label{lem:hexbasicdist}
In a rhombus basis section $\psi$ of length $i$ located at $(x,y)$, the maximum of the minimum graph distance between the lattice points in $\matg{N_1}(x,y)$ and any other lattice point in $\psi$ is $i-1$.
\end{lemma}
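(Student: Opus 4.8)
The plan is to exploit translation-invariance of the hexagonal graph metric, reduce to the case where the origin of $\psi$ is $(0,0)$, pin down $\matg{N_1}(0,0)$ explicitly, and then bound the distance from an arbitrary point of $\psi$ to this three-point set by following the first step of a shortest path out of the origin.

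First I would normalize: by Lemma~\ref{lem:hexdist} the graph distance depends only on the coordinate differences $(dx,dy)$, so after translating by $(-x,-y)$ we may assume $(x,y)=(0,0)$ and $\psi=\{(a,b):0\le a\le i,\ 0\le b\le i\}$. The six neighbours of $(0,0)$ are $(\pm1,0)$, $(0,\pm1)$ and $\pm(1,1)$ (each at distance $1$ by the formula), and precisely those with nonnegative coordinates, namely $(1,0)$, $(0,1)$, $(1,1)$, lie in $\psi$; hence $\matg{N_1}(0,0)=\{(1,0),(0,1),(1,1)\}$.

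The crux is then the claim that every $q=(a,b)\in\psi$ other than $(0,0)$ and its three neighbours satisfies $\min_{p\in\matg{N_1}(0,0)}d(p,q)\le d((0,0),q)-1$. To see this I would choose the neighbour through which a shortest path from $(0,0)$ leaves: $(1,1)$ when $a,b\ge1$, $(1,0)$ when $b=0$, and $(0,1)$ when $a=0$, and in each case evaluate $d$ via Lemma~\ref{lem:hexdist} — for example, for $a\ge b\ge1$ one gets $d((1,1),(a,b))=\max\{a-1,b-1,a-b\}=a-1=\max\{a,b,|a-b|\}-1$, the remaining subcases being symmetric or immediate. Since $0\le a,b\le i$ forces $d((0,0),q)=\max\{a,b,|a-b|\}\le i$, this yields $\min_{p}d(p,q)\le i-1$; the origin itself contributes only distance $1\le i-1$, so the maximum over all admissible $q$ is at most $i-1$. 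For the matching lower bound I would note that the corner $q=(i,i)\in\psi$ has $d((1,1),(i,i))=i-1$ while $d((1,0),(i,i))=d((0,1),(i,i))=i$, so its minimum distance to $\matg{N_1}(0,0)$ is exactly $i-1$.

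I do not expect any genuine obstacle here; the only thing to be careful about is the short case split above and the sign of $dx-dy$ when applying the $\max\{|dx|,|dy|,|dx-dy|\}$ formula for a neighbour and a point $q$ lying on opposite sides of the line $y=x$. (The degenerate case $i=1$, where $\psi$ consists only of the origin and its neighbours, is vacuous.)
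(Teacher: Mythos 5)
Your proof is correct and takes essentially the same route as the paper's: translate so the section sits at the origin, identify $\matg{N_1}(0,0)=\{(1,0),(0,1),(1,1)\}$, and apply Lemma~\ref{lem:hexdist} to show every point of the section at graph distance $l$ from the origin lies at distance $l-1$ from this set, with the maximum $l=i$ attained at the far corner. The only difference is bookkeeping --- the paper parametrizes the distance-$l$ shell inside the rhombus, whereas you argue pointwise via the first step of a shortest path and exhibit $(i,i)$ as the extremal witness --- so no substantive divergence.
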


\begin{proof}
To simplify the proof, we translate the origin to $(x,y)$. Thus it is sufficient to show that the lemma is true for the basis section at the origin.
$\matg{N_1}(0,0) = \{ (0,1), (1,1), (1,0)\}$. From Lemma~\ref{lem:hexdist}, the lattice points in the rhombus basis section at distance $l$ from $(0,0)$ are $\matg{N_l}(0,0) = \{(l-i, l), i = 0, \ldots, l\} \cup \{ (l,l-i), i = 1, \ldots, l \}$. Applying Lemma~\ref{lem:hexdist} once again, the minimum distance between 
 $\matg{N_1}(0,0)$ and $\matg{N_l}(0,0)$ is $l-1$. Since in a rhombus basis section of length $i$, the maximum value of $l = i$, the lemma follows.
\end{proof}



Consider a layout of rhombus basis sections of side-length $k$ where the origins of the sections lie at $(n(k+1), m(k+1))$ where $n, m$ are integers~(\figurename~\ref{fig:hex:prim:2hop}). WLOG, consider the rhombus basis section at the origin and let the node at the origin be scheduled for transmission.  From Lemma~\ref{lem:hexbasicdist},  the graph distance between $\matg{N_1}(0,0)$ and any other point in the section is at-most $k-1$. Since the graph distance between the nearest neighbors in $\matg{S}(0,0)$ is $k+1$, the number of edges between all six neighbors of $(0,0)$ and the nodes in $\matg{S}(0,0)$ is at-least $k$. Therefore, all nodes in $\matg{S}(0,0)$ can  be scheduled simultaneously.

 Algorithm~\ref{hx-nack-algo1} schedules nodes in $\matg{S}(0,0)$ in the first time slot, $t = 0$, followed by those in $\matg{S}(1,0)$ in the second time slot and so on till $\matg{S}(k,0)$ in $t = k$.  It follows the same pattern for all rows in a rhombus basis section: starting with nodes in $\matg{S}(0,i)$, that are scheduled in $t = ik+k$, it schedules the last node of this row in $t = (i+1)k+k$.  Since there are $(k+1)^2$ sets of the form $\matg{S}(i,j)$ with respect to the rhombus basis section, using the coverage property of the basis sections, we conclude that $(k+1)^2$ time slots are sufficient. Thus we can state the following theorem:

\begin{theorem}\label{thm:hex-interf} 
The NSP schedule complexity for hexagonal networks is bounded from above by $(k+1)^2$.
\end{theorem}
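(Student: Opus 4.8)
The plan is to verify that Algorithm~\ref{hx-nack-algo1} produces a valid schedule of length $(k+1)^2$, which immediately yields the upper bound. The essential claim is twofold: (i) the algorithm partitions $\verts$ into exactly $(k+1)^2$ classes, one for each value of $t(x,y) \in \{0,1,\dots,(k+1)^2-1\}$; and (ii) each such class is a valid set $\mathcal{N}_i$, i.e.\ picking one incident arc per node in a class gives a matching under the $k$-hop interference model. Given (i) and (ii), assigning $\lambda^i = 1$ to each of the $(k+1)^2$ slots satisfies the NSP constraints~(\ref{nsp:cond}), so $\tau = (k+1)^2$ is achievable and the schedule complexity is at most $(k+1)^2$.

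For (i), I would observe that the map $(x,y)\mapsto (u,v) = (x\bmod(k+1),\,y\bmod(k+1))$ is a surjection onto $\{0,\dots,k\}^2$, and $t = u + (k+1)v$ is a bijection from $\{0,\dots,k\}^2$ onto $\{0,\dots,(k+1)^2-1\}$; hence there are exactly $(k+1)^2$ distinct slots and every node is assigned to one. Moreover the set of nodes sharing a fixed $(u,v)$ is precisely $\matg{S}(u,v)$ relative to the rhombus basis section of side-length $k$ whose origins sit at $(n(k+1), m(k+1))$, since two nodes get the same $(u,v)$ iff their coordinates differ by an integer combination of $(k+1,0)$ and $(0,k+1)$, which is exactly the translation lattice of that tessellation. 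This is where I would invoke the coverage property of the basis section from the definition, together with the explicit rhombus layout already set up in the text before the theorem.

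For (ii), the work has essentially been done already in the paragraph preceding the theorem: by Lemma~\ref{lem:hexbasicdist}, within one rhombus basis section of side-length $k$ the graph distance from $\matg{N_1}(0,0)$ to any other lattice point of the section is at most $k-1$; and the nearest two representatives in $\matg{S}(0,0)$ are at graph distance $k+1$ (this uses Lemma~\ref{lem:hexdist} applied to the offset $(k+1,0)$, $(0,k+1)$, or $(k+1,k+1)$, all of which give distance $k+1$). Combining, any neighbor of a scheduled node in $\matg{S}(0,0)$ is at graph distance at least $(k+1)-(k-1) = 2 \ge \dots$; more carefully, one argues that a receiver adjacent to one transmitter is at least $k$ hops from every other transmitter, so the $k$-hop criterion holds and $\matg{S}(0,0)$ — and by translation every class $\matg{S}(u,v)$ — is simultaneously schedulable. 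I would phrase this as: \emph{by the argument establishing the correctness of Algorithm~\ref{hx-nack-algo1}, each of the $(k+1)^2$ node sets $\matg{S}(u,v)$ forms a valid slot}, and then conclude. I expect the only mildly delicate point to be pinning down the distance-$(k+1)$ claim between nearest elements of $\matg{S}(0,0)$ cleanly from Lemma~\ref{lem:hexdist} (the three candidate offsets $(k+1,0)$, $(0,k+1)$, $(k+1,k+1)$ must all be checked, and the last one gives $\max\{k+1,k+1,0\}=k+1$), but this is routine. The rest is bookkeeping on the counting argument, which the surrounding text has already spelled out.
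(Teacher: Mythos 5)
Your proposal is correct and follows essentially the same route as the paper: it invokes Lemma~\ref{lem:hexbasicdist} together with the graph distance $k+1$ between nearest elements of $\matg{S}(0,0)$ (via Lemma~\ref{lem:hexdist}) to show each class is simultaneously schedulable, and then counts the $(k+1)^2$ classes $\matg{S}(u,v)$ induced by the rhombus basis sections with origins at $(n(k+1),m(k+1))$. The only difference is that you spell out the partition/bijection bookkeeping for $t=u+(k+1)v$ slightly more explicitly than the paper does, which is a presentational refinement rather than a different argument.
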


\begin{figure}[h]
\centering
\subfigure[Hexagonal networks]
{
  \label{fig:hex:arb:sched1}
  \includegraphics[width=2.2in]{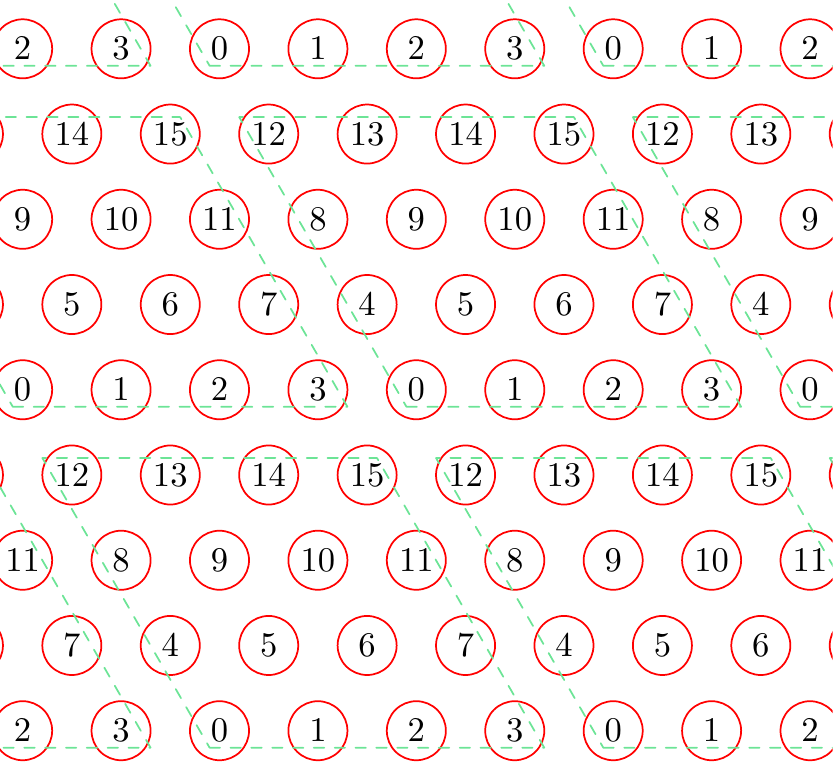}
}
%
\subfigure[Square-grid networks]
{
  \label{fig:sq:arb:sched1}
  \includegraphics[width=2.25in]{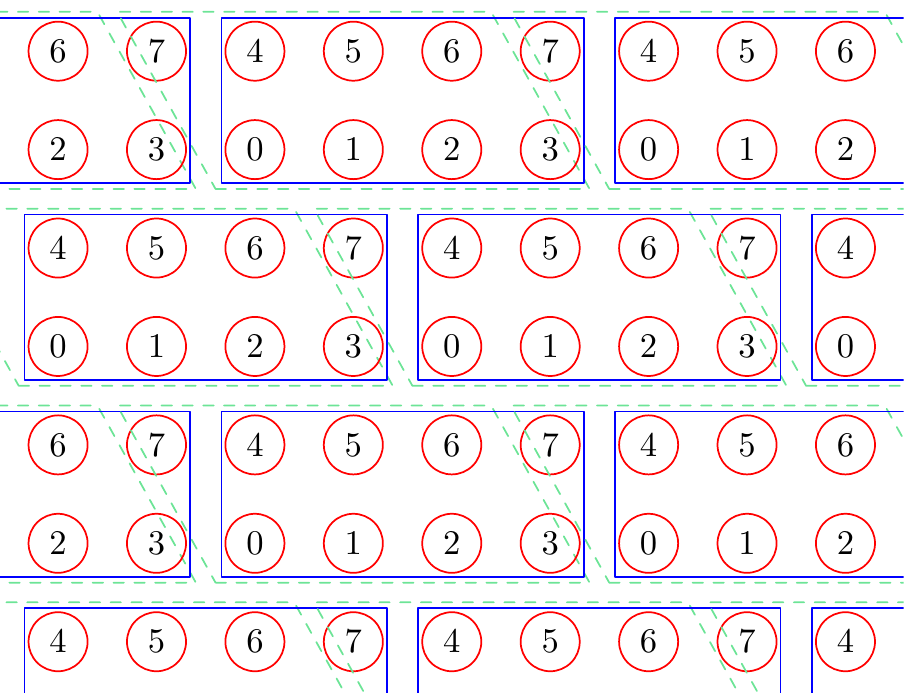}
}
\caption{Schedules for 3-hop interference model.}
\label{fig:sq:arb:all} 
\end{figure}

\subsection{Square-Grid Networks}

The intuition behind the treatment for square-grid networks is as follows. Consider the $3$-hop interference model. Let the origin of the square lattice be the location of one of the transmitters~(\figurename~\ref{fig:sq:prim:3hop}). Then the nodes at $(\pm 1, 0)$ and $(0, \pm 1)$ are the receivers. Clearly, nodes at $(\pm 4, 0)$ can be  scheduled simultaneously since their minimum distance from the neighbors of $(0,0)$ is exactly 3. However, along the $Y$-axis, we can shrink the rhomboid basis section by placing transmitters not at $(0, \pm 4)$, but at $(\pm 2, \pm 2)$, thereby decreasing the schedule length.
\begin{definition}[Rhomboid Basis Section]
Rhomboid basis section of size $(i,j)$ is a basis lattice section of square lattice whose corner points are of the form $(x,y), (x+i,y), (x+i-j, y+j), (x-j,y+j)$.
\end{definition}

{\bf Scheduling algorithm.~} Algorithm~\ref{sq-nack-algo1} solves NSP for square-grid networks using $(k+1) \lceil (k+1)/2 \rceil$ slots. Unlike the rhombus basis sections for hexagonal networks, the rhomboid basis sections are not aligned parallel to the (rectangular) axes. Hence, to further simplify the algorithm, we rearrange the slot sequences to align with the axes: pictorially, we are simply cutting the  upper left triangle of the rhomboids and placing them to the right side of the resulting trapezoids to complete a rectangle. In both cases, the number of lattice points remain the same. In (\figurename~\ref{fig:sq:arb:sched1}), the dashed lines mark the rhomboid basic sections and the solid lines mark the rectangles showing the sequence in which Algorithm~\ref{sq-nack-algo1} assigns time-slots.

Similar to the hexagonal case, the algorithm selects the nodes for transmission, $\matg{S}(x,y)$, systematically for all lattice points $(x,y)$ in every basis section. The rectangular basis sections are shifted along $X$-axis by $\lceil k/2 \rceil$ alternately (\figurename~\ref{fig:sq:arb:sched1}). The presence of shifts is indicated by the Boolean variable $b$ (Line 1). 

\begin{algorithm}
  \caption{Node scheduling algorithm for square-grid networks}
  \label{sq-nack-algo1}
  \begin{algorithmic}[1]
    \REQUIRE $(x,y)$  \COMMENT{Node address}
    \REQUIRE $k$ \COMMENT{Interference model}
    \STATE $b \leftarrow \left\lfloor \left( y/ \left\lceil \frac{(k+1)}{2} \right\rceil \right)\right\rfloor \mod 2$ 
    \STATE $u \leftarrow \left(x + \left(b * \lceil (k+1)/2 \rceil\right)\right) \mod (k+1)$
    \STATE $v \leftarrow y \mod \left\lceil (k+1)/2 \right\rceil$
    \ENSURE $t(x,y) \leftarrow u + (k+1) v$  \COMMENT{Tx slot}
  \end{algorithmic}
\end{algorithm}

\subsubsection* {Correctness proof.}

\begin{lemma}\label{lem:sqbasicdist}
In a rhomboid basis section of size $(i,\lceil i/2 \rceil)$ located at $(x,y)$, the maximum of the minimum graph distance between lattice points in $\matg{N_1}(x,y)$ and any other lattice point in the rhombus basis section is $i-1$.
\end{lemma}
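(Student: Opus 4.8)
The plan is to imitate the proof of Lemma~\ref{lem:hexbasicdist} almost verbatim, with the hexagonal metric of Lemma~\ref{lem:hexdist} replaced by the $L_1$ metric of Lemma~\ref{lem:sqdist}. First I would translate the origin to $(x,y)$, so that it suffices to treat the rhomboid basis section at the origin, whose corners are $(0,0)$, $(i,0)$, $(i-j,j)$ and $(-j,j)$ with $j=\lceil i/2\rceil$. The first concrete task is to determine $\matg{N_1}(x,y)$: of the four lattice neighbours $(\pm 1,0),(0,\pm 1)$ of the origin, only $(1,0)$ (on the bottom edge) and $(0,1)$ (strictly inside, since for $i\ge 2$ the section reaches down to $x=-1$ at height $y=1$) lie in this section, the other two belonging to adjacent replicas; so $\matg{N_1}(x,y)=\{(1,0),(0,1)\}$ after the translation.

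Next I would describe the distance-$l$ shell of the section and its distance from $\matg{N_1}$. Since every point $(a,b)$ of the section has $b\ge 0$, Lemma~\ref{lem:sqdist} gives $d((0,0),(a,b))=|a|+b$, so $\matg{N_l}(x,y)$ is the union of the two arms $\{(l-b,b):0\le b\le l\}$ and $\{(b-l,b):0\le b\le l\}$, intersected with the section. Using Lemma~\ref{lem:sqdist} a second time, the distance from $(1,0)$ to $(l-b,b)$ equals $(l-b-1)+b=l-1$ for $b\le l-1$, and by symmetry the distance from $(0,1)$ to $(b-l,b)$ equals $(l-b)+(b-1)=l-1$ for $b\ge 1$; and the triangle inequality forbids anything smaller, because $\matg{N_1}$ sits at distance $1$ and $\matg{N_l}$ at distance $l$ from the origin. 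Hence the minimum distance between $\matg{N_1}(x,y)$ and $\matg{N_l}(x,y)$ is exactly $l-1$, and the lemma reduces to showing that the largest $l$ for which the section meets $\matg{N_l}(x,y)$ equals $i$.

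The step I expect to be the real obstacle is precisely this last one, because the rhomboid is not symmetric in its two generating directions the way the rhombus of the hexagonal analysis is: one side has length $i$, but the skew side $(-j,j)$ has $L_1$-length $2\lceil i/2\rceil$, which matches $i$ when $i$ is even but exceeds it by one when $i$ is odd. Consequently the point of the section farthest in graph distance from the origin is one of the corners $(i,0)$ or $(i-j,j)$ when $i$ is even, but can be the corner $(-j,j)$ when $i$ is odd, and the exact value $i-1$ in the statement is sensitive to which of the boundary edges of a replica is counted as belonging to it (the half-open tessellation convention). I would therefore split on the parity of $i$, and in each case check with the $L_1$ formula of Lemma~\ref{lem:sqdist} that, under the convention fixed for the basis section, the extreme point of the section lies at graph distance exactly $i-1$ from $\{(1,0),(0,1)\}$; this is the analogue of the single closing application of Lemma~\ref{lem:hexdist} in the proof of Lemma~\ref{lem:hexbasicdist}, but here it genuinely carries the content of the lemma.
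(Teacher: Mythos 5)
Your proposal follows essentially the same route as the paper's proof, which is itself a compressed mirror of the proof of Lemma~\ref{lem:hexbasicdist}: translate to the origin, observe $\matg{N_1}(0,0)=\{(1,0),(0,1)\}$, and apply Lemma~\ref{lem:sqdist} twice, once to control the distance of the section's points from the origin and once to convert that into the distance from $\matg{N_1}$. The steps you actually carry out are correct and in fact sharper than the paper's two-line argument: since every point $q\neq(0,0)$ of the section has $y\ge 1$ or ($y=0$ and $x\ge 1$), its minimum distance to $\matg{N_1}(0,0)$ is exactly $d((0,0),q)-1$, so, as you say, the lemma is equivalent to the assertion that the farthest lattice point of the section from its origin lies at graph distance $i$.

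The gap is that you stop at precisely the step that carries the content, and that step is the one the paper also elides. The paper simply asserts that all lattice points of the rhomboid satisfy $|x|+|y|\le i$; as you observe, this is false for the closed rhomboid when $i$ is odd, because the corner $(-\lceil i/2\rceil,\lceil i/2\rceil)$ lies at distance $2\lceil i/2\rceil=i+1$, and for odd $i$ the closed rhomboid contains $(i+1)(\lceil i/2\rceil+1)$ lattice points, too many to tessellate under translations by $(i,0)$ and $(-\lceil i/2\rceil,\lceil i/2\rceil)$, so some half-open boundary convention is unavoidable. But the convention changes the answer: with the natural half-open choice (rows $y=0,\dots,\lceil i/2\rceil-1$, each of width $i$) the farthest point from the origin is at distance $i-1$ and your reduction yields $i-2$, whereas the closed rhomboid with even $i$ gives exactly the stated $i-1$ (and for odd $i$, closed, it would give $i$). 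So the parity-and-convention check you defer is not a routine verification; carrying it out shows that the exact constant in the statement depends on an unstated boundary convention — a looseness of the lemma itself (harmless for the correctness argument, which only needs an upper bound of the form used after Lemma~\ref{lem:hexbasicdist}) rather than a defect of your approach. To stand as a proof of the statement as written you would have to fix that convention and complete the case analysis you only outline; the paper's own proof does not do this either, it simply asserts the bound.
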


\begin{proof}
Similar to the proof of Lemma~\ref{lem:hexbasicdist}. WLOG we translate the origin to $(x,y)$. Thus it is sufficient to show that the lemma is true for the rhomboid basis section at the origin.
$\matg{N_1}(0,0) = \{ (0,1), (1,0)\}$. From Lemma~\ref{lem:sqdist}, the coordinates of the lattice points in the rhomboid basis section satisfy $|x|+|y| \le i$. Applying Lemma~\ref{lem:sqdist} once again, the maximum of the minimum edge count between $\matg{N_1}(0,0)$ and a lattice point in the basis section is $i-1$.
\end{proof}

Consider the node at the origin. Clearly, it can be scheduled along-with the set of nodes at $(\pm n(k+1), 0)$ since they are at $\ge k$ edges from the neighbors of $(0,0)$. Let $P(u,v), v \neq 0$ be the nearest node from the origin that can be simultaneously scheduled.  WLOG, let $P$ be in the first quadrant where  $v > 0$. Then, $P$ must be $\ge k$ edges from $(0,1)$, $(1,0)$, $(k,0)$ and $(k+1,1)$, where the latter two are the neighbors of $(k+1,0)$. From Lemma~\ref{lem:sqdist}, we get the following two unique conditions:
\begin{eqnarray}
u + v -1 &=& k \nonumber\\
k - u + v &=& k  \nonumber\\
\Rightarrow v &=& \left\lceil (k+1)/2 \right\rceil ~=~ u.
\end{eqnarray}
Thus, we have shown that the origins of rhomboid basis sections defined above can be scheduled simultaneously, and more generally, the lattice coordinates of the six nearest neighbors of a node at $(x,y)$ that can be scheduled simultaneously are $(x \pm (k+1), y), (x \pm u, y \pm v)$~(\figurename~\ref{fig:sq:prim:3hop}). The rest of the correctness proof can be completed in a similar manner as done above for the hexagonal networks.

\begin{theorem}\label{thm:sq-interf}
The NSP schedule  complexity for square-grid networks is bounded from above by $(k+1) \lceil (k+1)/2 \rceil$.
\end{theorem}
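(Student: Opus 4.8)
The plan is to mirror the argument already carried out for hexagonal networks. First I would verify that the rhomboid basis section of size $(k+1, \lceil (k+1)/2 \rceil)$ is indeed a basis lattice section — i.e., that its linear translates by the vectors $((k+1),0)$ and $(u,v) = (\lceil (k+1)/2\rceil, \lceil (k+1)/2\rceil)$ (the two neighbor offsets established just before the theorem) satisfy both the Coverage and Uniqueness conditions. This is the geometric heart of the claim: the $6$ nearest simultaneously-schedulable neighbors of a node sit exactly at the corner-defining offsets of the rhomboid, so the replicas meet edge-to-edge and tile $\mathbb{Z}^2$ without overlap. I would count its lattice points: a rhomboid of size $(i, \lceil i/2\rceil)$ with $i = k+1$ contains exactly $(k+1)\lceil (k+1)/2\rceil$ points, since each of the $\lceil (k+1)/2\rceil$ ``rows'' (indexed by $v = 0, \ldots, \lceil(k+1)/2\rceil - 1$) contains $k+1$ consecutive lattice points after the triangle-cutting rearrangement described in the text.

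Next I would establish the collision-freedom of the schedule produced by Algorithm~\ref{sq-nack-algo1}. Two nodes receive the same slot $t(x,y) = u + (k+1)v$ iff they agree in $u$ and in $v$, which (unwinding Lines~1--3) forces their coordinate difference to be an integer combination of the lattice vectors $((k+1),0)$ and $(\lceil(k+1)/2\rceil, \lceil(k+1)/2\rceil)$ — that is, iff they lie in the same set $\matg{S}(x,y)$ relative to the rhomboid tiling. So it suffices to show that the set of origins $\matg{S}(0,0)$, together with the one scheduled node in each translate, forms a valid matching under the $k$-hop interference model. By Lemma~\ref{lem:sqdist} the neighbors of $(0,0)$ are $\matg{N_1}(0,0) = \{(\pm1,0),(0,\pm1)\}$; by the displayed computation preceding the theorem, any other scheduled node $P$ is at $L_1$-distance $\ge k$ from every neighbor of $(0,0)$ and, by the symmetric argument (the $(k,0)$, $(k+1,1)$ constraints), from the neighbors of all scheduled nodes. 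Combined with Lemma~\ref{lem:sqbasicdist}, which bounds the relevant in-section distances by $i - 1 = k$, this guarantees that within any slot every receiver is $\ge k$ hops from every concurrent transmitter.

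Finally, I would invoke the Coverage property: since Algorithm~\ref{sq-nack-algo1} assigns to each of the $(k+1)\lceil(k+1)/2\rceil$ lattice points $(x,y)$ of a single rhomboid basis section a distinct slot $t(x,y) \in \{0, 1, \ldots, (k+1)\lceil(k+1)/2\rceil - 1\}$, and since every node of the network lies in exactly one translate of the basis section at a unique relative position, every node is scheduled in exactly one of these slots, and each such slot class is collision-free by the previous paragraph. Hence $(k+1)\lceil(k+1)/2\rceil$ slots suffice, each of unit duration, so $\tau \le (k+1)\lceil(k+1)/2\rceil$ and the schedule complexity is bounded above by this quantity. The main obstacle I anticipate is the alternating-shift bookkeeping: the rhomboids are not axis-aligned, so one must carefully check that the ``cut the upper-left triangle and slide it right'' reindexing (encoded by the Boolean $b$ depending on $\lfloor y / \lceil(k+1)/2\rceil\rfloor \bmod 2$) genuinely realizes the rhomboid tiling as a union of axis-aligned rectangles — in particular that consecutive bands line up with the correct horizontal offset $\lceil(k+1)/2\rceil$ and that no lattice point is double-counted or missed at the band boundaries.
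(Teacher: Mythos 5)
Your proposal is correct and follows essentially the same route as the paper: it uses the rhomboid basis section of size $(k+1,\lceil(k+1)/2\rceil)$, the same nearest-simultaneous-transmitter computation yielding the offsets $(\pm(k+1),0)$ and $(\pm\lceil(k+1)/2\rceil,\pm\lceil(k+1)/2\rceil)$, Lemma~\ref{lem:sqbasicdist}, and the coverage property of the tiling to count $(k+1)\lceil(k+1)/2\rceil$ slots. If anything, you spell out the tiling verification and the alternating-shift bookkeeping more explicitly than the paper, which simply defers these to the hexagonal argument.
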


\subsection{Optimality Analysis}
 Interference graph of $\Gr=(\verts,\arcs)$ is defined to be $\Gr'=(\verts,\arcs')$ where $(u,v) \in \arcs'$ if simultaneous transmissions of the nodes $u, v \in \verts$ interfere. Clearly, the schedule complexity is bounded from below by the clique number (number of nodes in a maximum clique) of the interference graph. Let $\omega(\Gr'_H,k)$ and $\omega(\Gr'_S,k)$ denote the clique numbers of the interference graphs for the $k$-hop interference model of hexagonal and square-grid networks, respectively. We use subscripts $e$ and $o$ to denote the cliques for even and odd $k$, respectively. \figurename~\ref{fig:clik} shows some maximum cliques where we have used the symbols for clique numbers to label corresponding figures.

\begin{figure}[h]
\centering
\subfigure[$\omega(\Gr'_H,2)$]
{
  \label{fig:clik:h2}
  \includegraphics[scale=0.4]{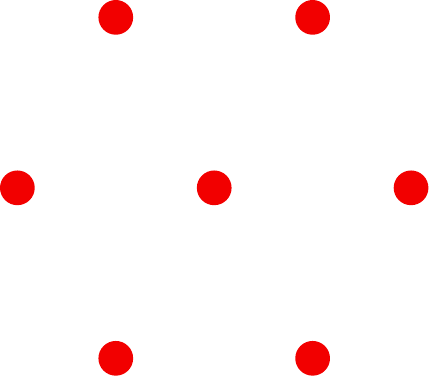}
}
  \hspace{0.05\columnwidth}%
\subfigure[$\omega(\Gr'_H,3)$]
{
  \label{fig:clik:h3}
  \includegraphics[scale=0.4]{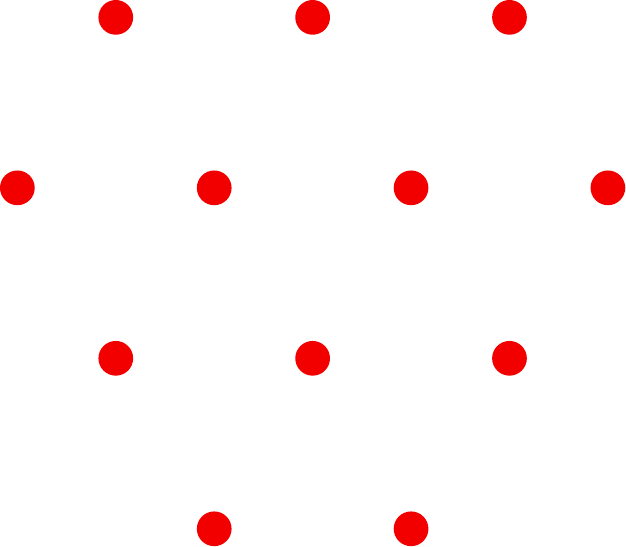}
}
\hspace{0.05\columnwidth}
\subfigure[$\omega(\Gr'_s,2)$]
{
  \label{fig:clik:s2}
  \includegraphics[scale=0.4]{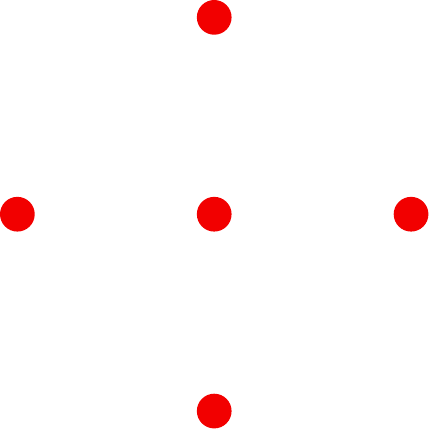}
}
\hspace{0.05\columnwidth}
\subfigure[$\omega(\Gr'_s,3)$]
{
  \label{fig:clik:s3}
  \includegraphics[scale=0.4]{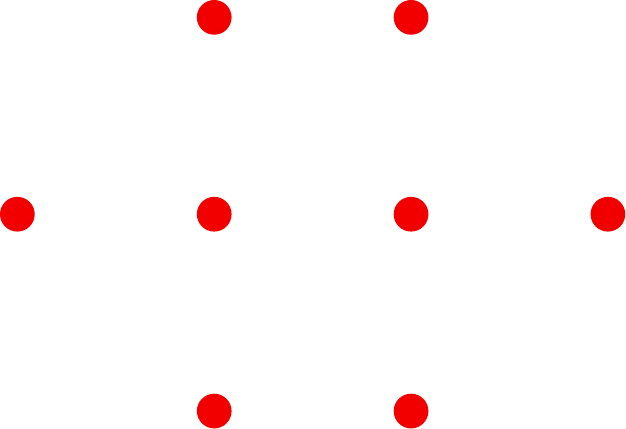}
}
\caption{Maximum cliques}
\label{fig:clik} 
\end{figure}

For a hexagonal interference graph and even $k$,  a complete hexagonal network of diameter $k$ ($k/2$ concentric hexagonal rings where the $i^{\rm{th}}$ ring has $6i$ nodes plus a node at the center) forms a maximum clique (\figurename~\ref{fig:clik}). For odd $k$, a complete hexagonal network of diameter $k-1$ and their neighbors on any one side of a diagonal (excluding the nodes on the diagonals) forms a maximum clique. Therefore, $\omega(\Gr'_H,k)$ for even $k$ is:

\begin{eqnarray}
\omega_e(\Gr'_H,k) &=& 1 + \sum_{i=1}^{k/2} 6i\nonumber\\
                 &=& 3 (k/2)^2 + 3 (k/2) + 1,
\end{eqnarray}

and that for odd $k$ is:
\begin{eqnarray}
\omega_o(\Gr'_H,k) &=& 1 + \sum_{i=1}^{(k-1)/2} 6i + 3 (k+1)/2 -1\nonumber\\
                 &=& 3 (k/2)^2 + 3 (k/2) + 2.
\end{eqnarray}

Interestingly, maximum clique in square-grid interference graphs is formed by node arrangements analogous to the hexagonal interference graphs. For even $k$, $k/2$ complete concentric squares oriented at an angle $\pi/4$ and for odd $k$, $(k-1)/2$ complete concentric squares oriented at an angle $\pi/4$ plus neighboring nodes on any one side of an axis form a maximum clique (\figurename~\ref{fig:clik}). Thus, it can be shown that:
\begin{eqnarray}
\omega_e(\Gr'_S,k) &=&  k^2/2  +  k + 1,\\
\omega_o(\Gr'_S,k) &=& (k+1)^2/2.
\end{eqnarray}

As we presented earlier, the exact schedule lengths of hexagonal and square-grid node scheduling algorithms are  $(k+1)^2$ and  $(k+1) \lceil (k+1)/2 \rceil$, respectively. Since  $(k+1)^2  = (4/3) \omega(\Gr'_H,k) - 1/3$ for even $k$, and  $(k+1)^2  = (4/3) \omega(\Gr'_H,k) - 5/3$ for odd $k$, the approximation ratio of the hexagonal scheduling algorithm is $4/3$. 

For square-grid networks and even $k$, $(k+1) (k+2)/2  < \rho * \omega(\Gr'_S,k)$, where $\rho$ is a non-increasing function of $k$, and hence attains its maximum 5/4 at $k=2$ (note that $k > 0$). For odd $k$, the approximation ratio is 1. Thus, the approximation ratio of the square-grid scheduling algorithm is $5/4$ and the algorithm is optimal for odd $k$'s.

\section{Scheduling with SINR Constraints}\label{sec:sinr}
We derive conditions such that the SINR criteria is satisfied at all receivers when the scheduling algorithms presented in the previous section are used. 
%
%
%
%
 We make the non-limiting assumption of the path-loss exponent, $\gamma > 2$. We also assume uniform noise level. Let $d$  and $D$ be the minimum and maximum euclidean distances between a pair of neighboring nodes in a given deployment. Since the sender-receiver separation is at-most $D$, the lower bound on the received signal strength is given by: 
\begin{equation}
{\rm Signal}_{\rm Min} = \frac{P} {D^{\gamma}}.\label{eq:sigmin}
\end{equation}
Observe that upper bound on interference is attained at a node at the center of a deployment in a {\em physical\/} node placement configuration where {\em the nodes are located on regular lattice points  having the inter-node separation equal to $d$}.


\begin{figure}[h]
\centering 
\subfigure[Hexagonal networks] 
{
\label{fig:hexconc}
\includegraphics[width=2in]{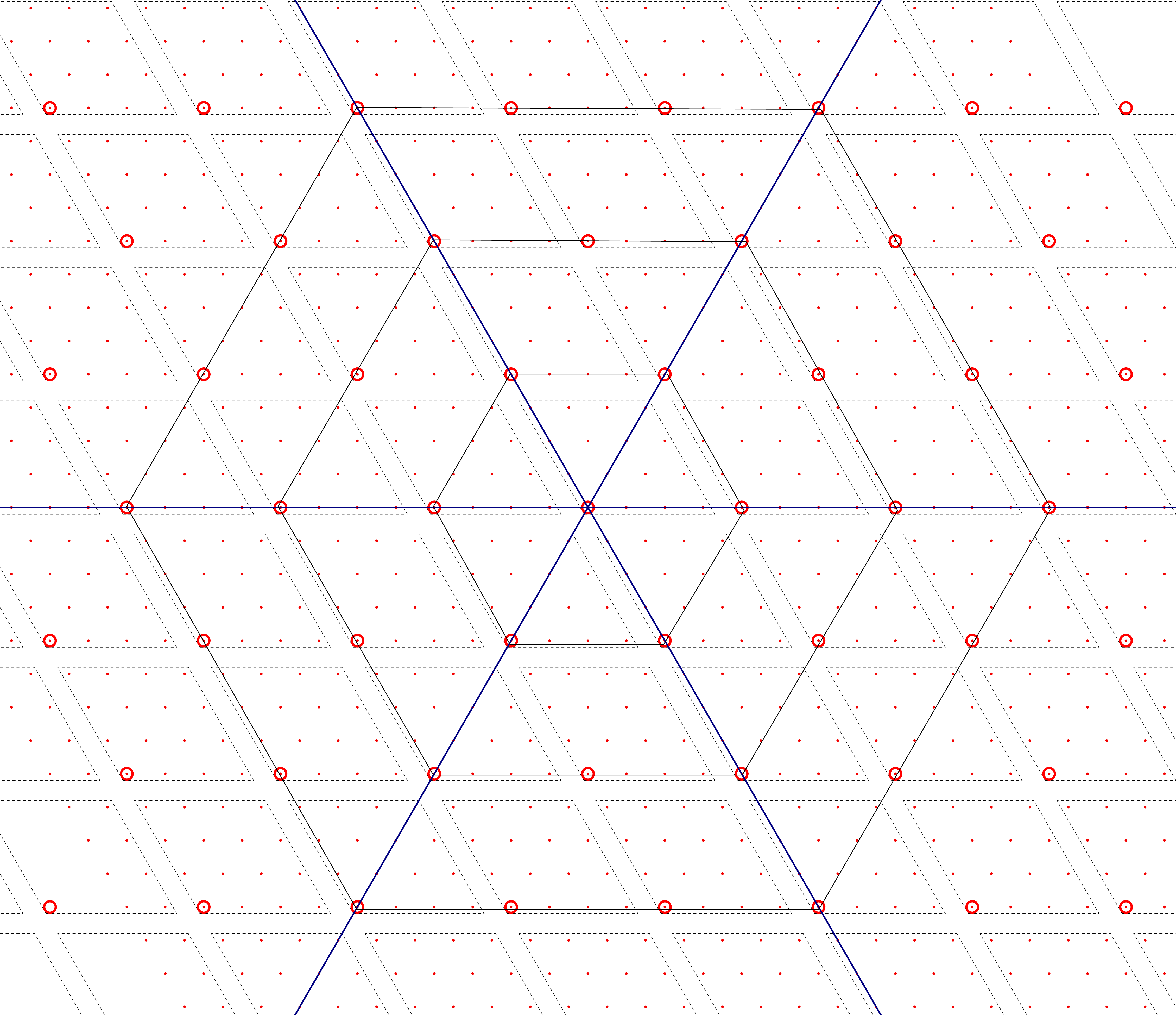}
}
\subfigure[Square-grid networks] 
{
  \label{fig:sqconc}
  \includegraphics[width=2in]{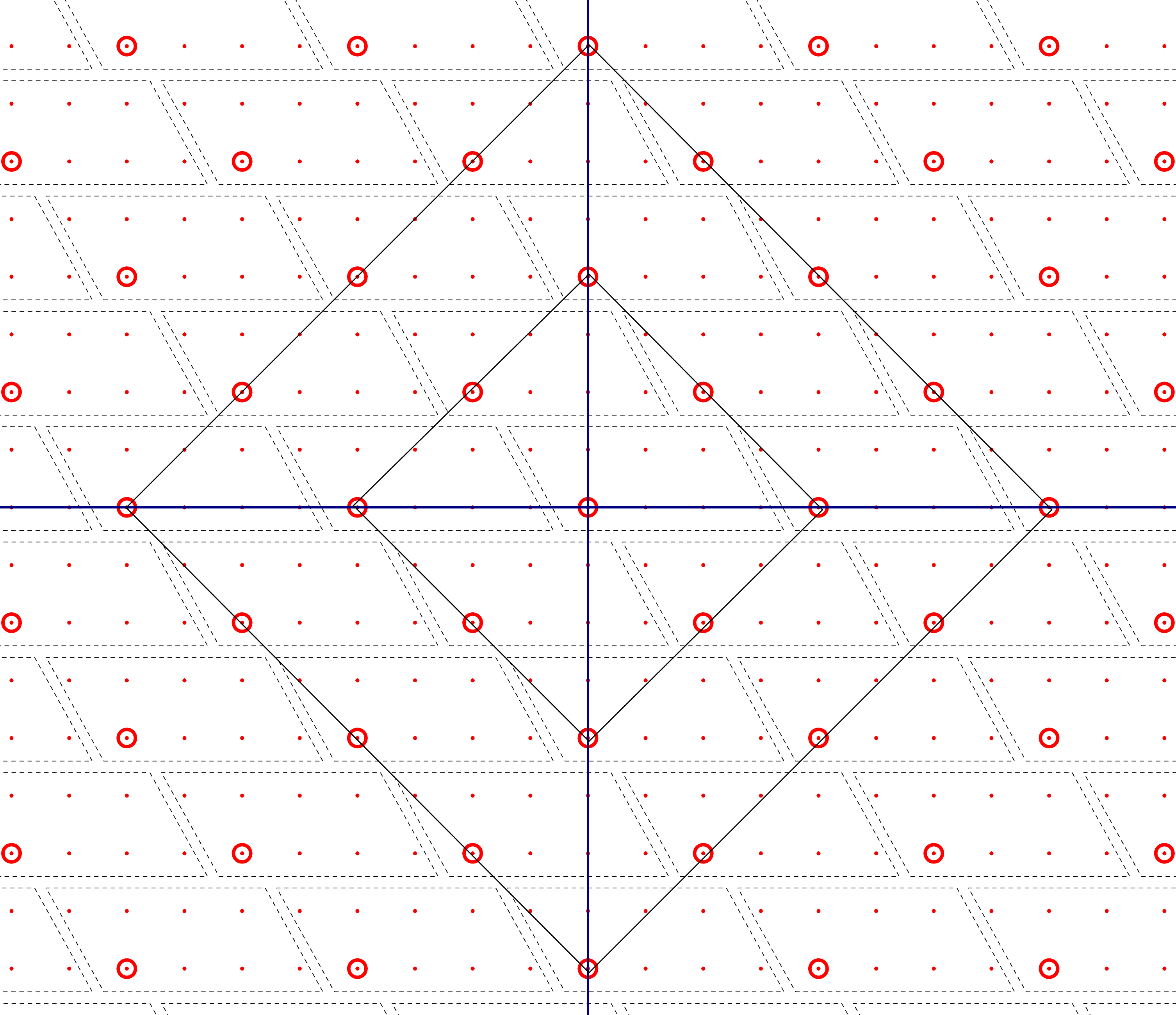}
}
\caption{ A maximal simultaneously scheduled node set (shown with larger dots.)}
\label{fig:maxset} 
\end{figure}

\subsection{Hexagonal Networks} 
Simultaneously scheduled nodes fall on concentric hexagons in the lattice representation, as shown in Figure~\ref{fig:hexconc}. From Theorem~\ref{thm:hex-interf}, the minimum distance between simultaneously scheduled transmitters is $k+1$ edges. Let $l$ denote the lower bound on the corresponding euclidean distance. Thus, $l = (k+1)d$. To get the upper bound on interference at the neighbors of the transmitter at the origin, we use the symmetry of the network. We derive the contribution from one hextant~(Fig.~\ref{fig:hexcoord}) and multiply it by 6. WLOG, consider the $n^{\rm th}$ concentric hexagon in the first hextant. The lattice coordinates of the transmitters are $(n(k+1),0), (n(k+1), k+1), \ldots, (n(k+1), (n-1)(k+1)) $. Let $s_{i,n}$ denote the euclidean distance between the origin and the $i^{\rm th}$ transmitter. Then,
\begin{eqnarray}
s_{i,n}^2  & \ge & n^2 l^2 + (i-1)^2 l^2 - 2 (i-1) n l^2 \cos(\pi/3)\nonumber\\
 {\rm or,~}   s_{i,n}   &\ge& l (n^2 + (i-1)^2 - (i-1) n)^{1/2}\nonumber
\end{eqnarray}

The euclidean distance between any of these $n$ transmitters and any neighbor of the node at the origin is lower bounded by  $s_{i,n} - d$. Therefore, 
 an upper bound on interference is:
\begin{eqnarray}
{\rm Interf.}  &\le& 6 \sum_{n=1}^{\infty} \sum_{i=1}^{n} \frac{P}{{(s_{i,n}-d)}^{\gamma}}.\label{eq:interf:h1}
\end{eqnarray}
%
Expanding and simplifying (\ref{eq:interf:h1}), we get
\begin{eqnarray}
{\rm Interf.}  & < & \frac{6 P}{l^\gamma} \left(\frac{2}{\sqrt{3}}\right)^\gamma \sum_{n=1}^{\infty}  \frac{1}{n^{\gamma-1}}.\label{eq:interf:h2}\\
 &<& \frac{6 P}{{l}^{\gamma}} \left(\frac{2}{\sqrt{3}}\right)^{\gamma} \frac{\gamma -1}{\gamma -2},\label{eq:inthex0}\\
&=& \frac{6 P}{{((k+1)d)}^{\gamma}} \left(\frac{2}{\sqrt{3}}\right)^{\gamma} \frac{\gamma -1}{\gamma -2},\label{eq:inthex}
\end{eqnarray}
 where (\ref{eq:inthex0}) follows from a Riemann-Zeta function bound.

Thus, from (\ref{eq:sigmin}) and (\ref{eq:inthex}), SINR is above threshold, $\beta$, if:
\begin{equation}
P > \frac{\beta \eta D^{\gamma}} {1- 6 \beta {\left(\frac{2 D}{\sqrt{3}(k+1)d}\right)}^{\gamma} \frac{\gamma-1}{\gamma-2}}. \label{eq:hexpossible}
\end{equation}

Since $P > 0$, the denominator must be greater than 0. Therefore:
\begin{equation}
\frac{D}{d} <  \frac{\sqrt{3}(k+1)}{2} {\left(\frac{\gamma-2}{6 \beta (\gamma-1)}\right)}^{1/\gamma}. \label{eq:hexdd}
\end{equation}

Finally, since $D \ge d$, 
\begin{equation}
\beta \le  \left(\frac{\sqrt{3}(k+1)}{2}\right)^\gamma {\left(\frac{\gamma-2}{6 (\gamma-1)}\right)}. \label{eq:hexbta}
\end{equation}

Plots of the feasibility regions given by (\ref{eq:hexdd})-(\ref{eq:hexbta}) are shown in \figurename~\ref{fig:feasr}.

\begin{figure}[!h]
\centering 
\subfigure[Inequality (\ref{eq:hexdd}), $\beta = 1$] 
{
\label{fig:hexDd}
\includegraphics[width=2.25in]{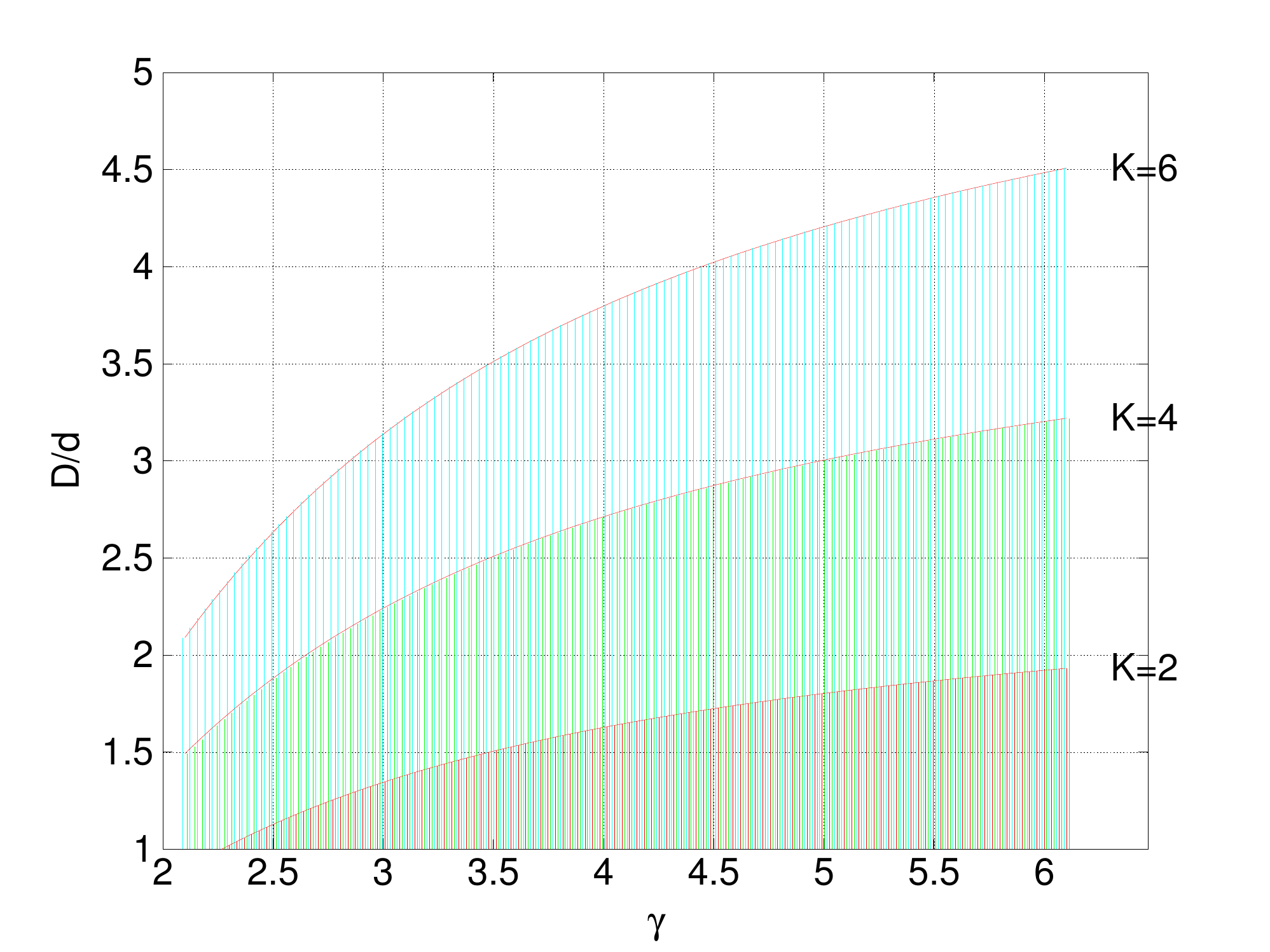}
}
\subfigure[Inequality (\ref{eq:hexbta})] 
{
  \label{fig:hexbeta}
  \includegraphics[width=2.25in]{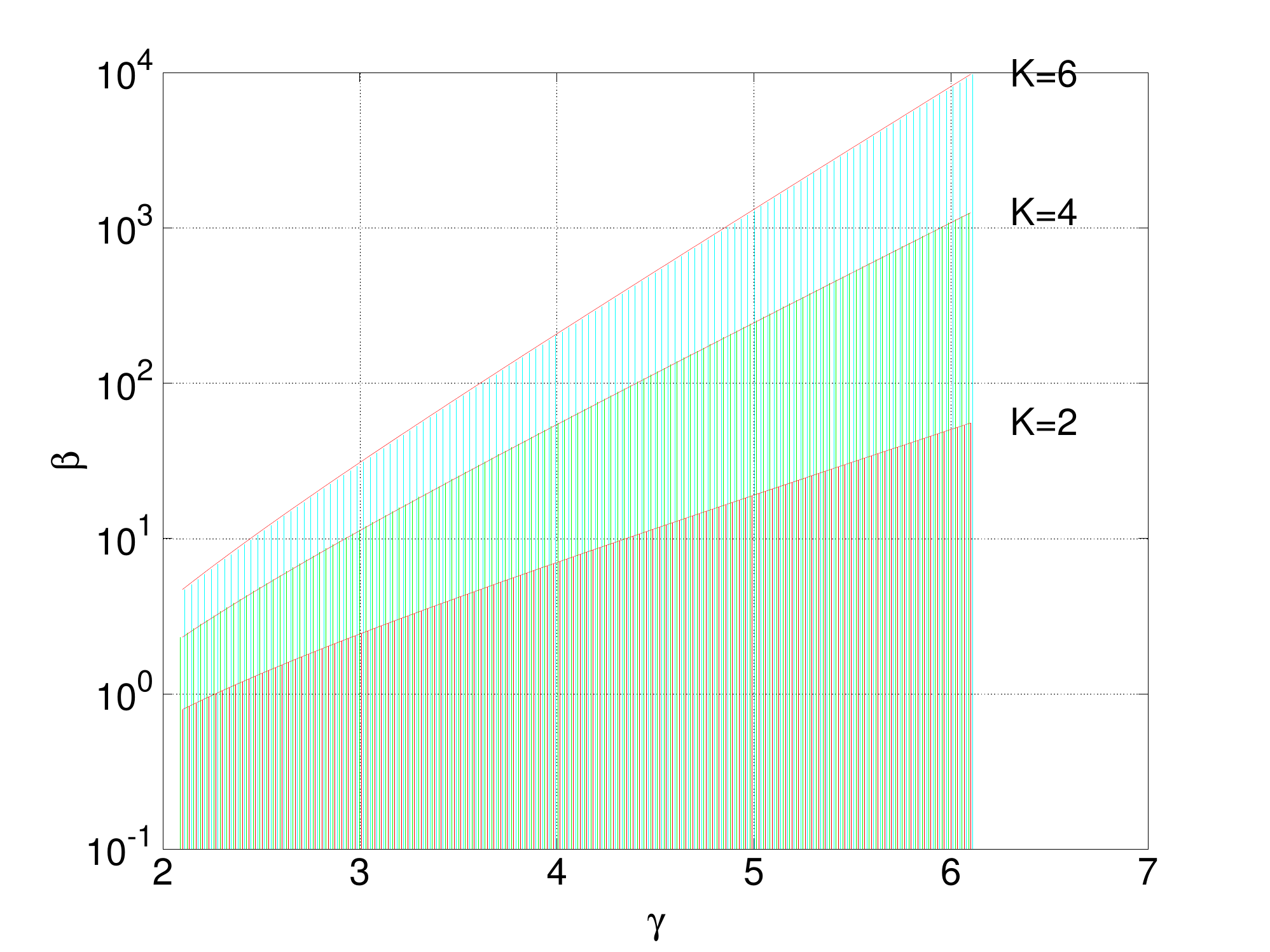}
}
\caption{Feasibility region for hexagonal networks.}
\label{fig:feasr} 
\end{figure}

\begin{figure}[!h]
\centering 
\subfigure[Inequality (\ref{eq:sqdd}), $\beta = 1$] 
{
\label{fig:sqDd}
\includegraphics[width=2.25in]{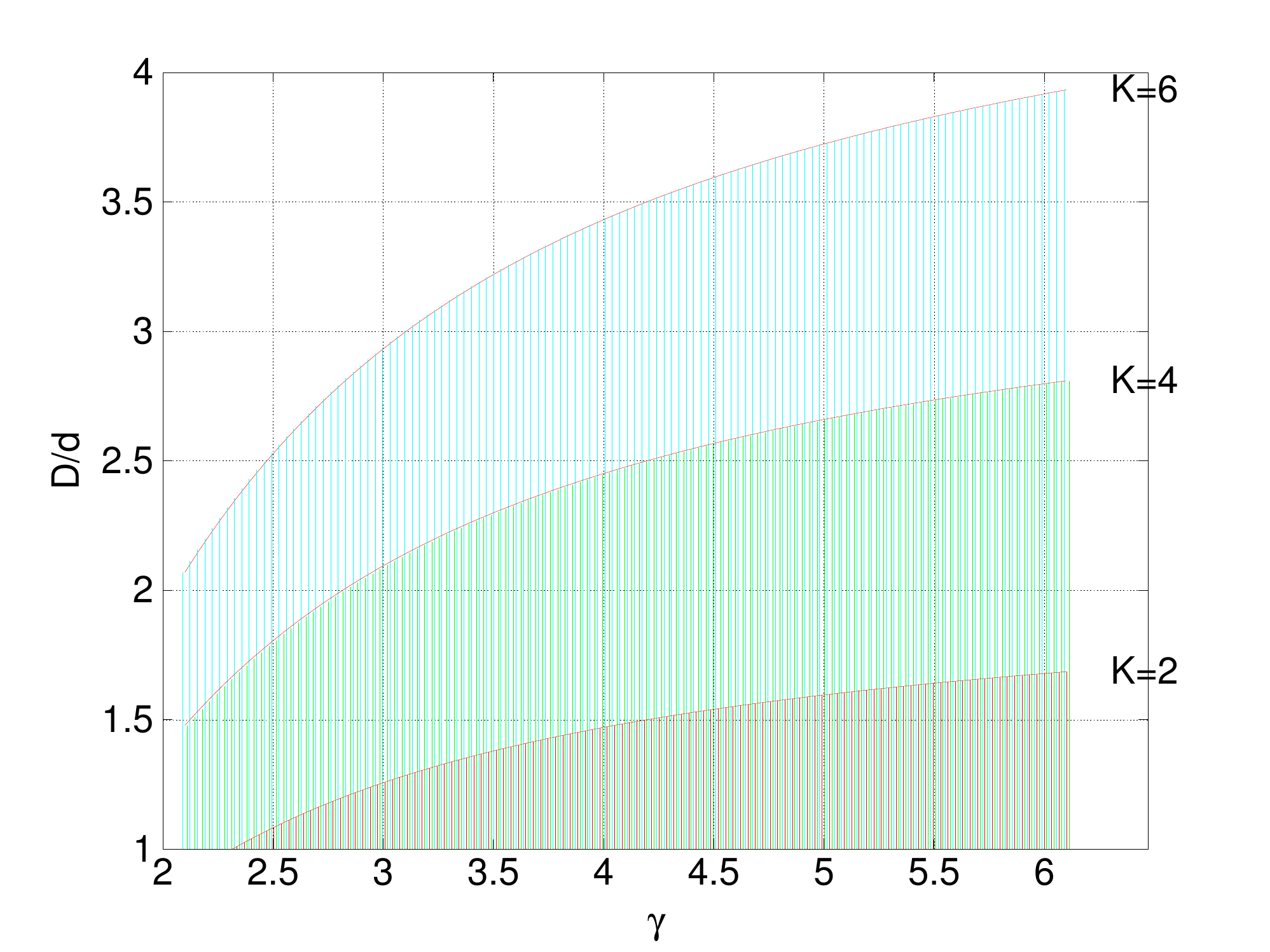}
}
\subfigure[Inequality (\ref{eq:sqbta})] 
{
  \label{fig:sqbeta}
  \includegraphics[width=2.25in]{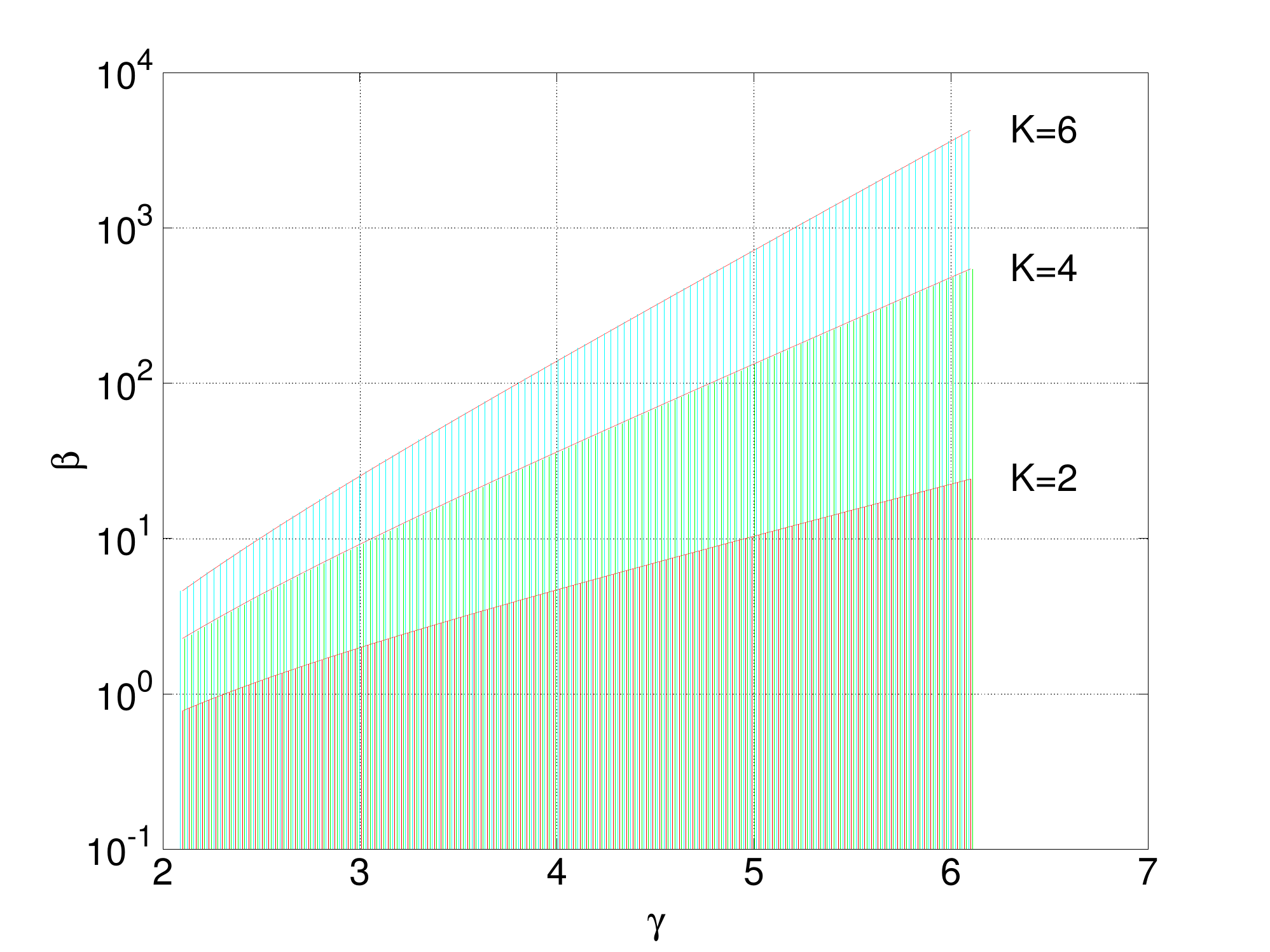}
}
\caption{Feasibility region for square-grid networks. }
\label{fig:feassq} 
\end{figure}

\subsection{Square-Grid Networks} 
We proceed similar to the previous analysis, and use the same symbols as in the hexagonal case. However, we put a superscript $S$ when ambiguity may arise, e.g., $P_S$ denotes the transmission power. For square-grid networks, the set of simultaneously scheduled nodes fall on concentric squares (Figure~\ref{fig:sqconc}). The corners of these concentric squares are at lattice coordinates $(\pm n (k+1), 0)$ and $(0, \pm n(k+1))$. Observe that there are $2n$ transmitters on each of the four sides.
\begin{eqnarray}
s_{i,n}^2  &=& n^2 l^2 + (i-1)^2 (l/\sqrt{2})^2 - 2 (i-1) n l (l/\sqrt{2}) \cos(\pi/4)\nonumber\\
  {\rm or,~} s_{i,n}   &=& l (n^2 + (i-1)^2/2 - (i-1) n)^{1/2}.\nonumber
\end{eqnarray}

Upper bound on interference is:
\begin{eqnarray}
{\rm Interf.}  &\le& 4 \sum_{n=1}^{\infty} \sum_{i=1}^{2n} \frac{P_S}{{(s_{i,n}-d)}^{\gamma}}.\label{eq:interf:sq:h1}
\end{eqnarray}
Unlike hexagonal networks, there is large difference in the distances between the origin and the lattice points at the four corners, and between the origin and the mid-point of a side of the squares. To obtain a tighter bound, we bound the inverse distances of $n/2$ nodes centered at the midpoint by that of the origin and $n^{\rm th}$ lattice point, and bound the inverse distances of the rest $n/2$ nodes of the side by that of the origin and $n/2^{\rm th}$ lattice point (for $n=1$, the latter is the corner point).

\begin{eqnarray}
  s_{n,n} - d  &=& nl\left( \frac{1}{\sqrt{2}} - \frac{1}{\sqrt{2}} \frac{d}{l}  - \frac{d}{nl} + \frac{1}{\sqrt{2}} \frac{d^2}{l^2} + O\left(\frac{d^3}{l^3}\right) \right).\nonumber\\
 s_{n/2,n} - d  &=& nl \left( \frac{\sqrt{5}}{\sqrt{8}} - \frac{3}{\sqrt{40}} \frac{d}{l} - \frac{d}{nl} + \frac{1}{\sqrt{40}} \frac{d^2}{l^2} + O\left(\frac{d^3}{l^3}\right) \right).\nonumber
\end{eqnarray}
Upon simplification:
\begin{eqnarray}
{\rm Interf.}  & < & \frac{4 P_S}{{((k+1)d)}^{\gamma}} \left(\nu^{\gamma} + \phi^{\gamma} \right) \frac{\gamma -1}{\gamma -2},\label{eq:intsq}
\end{eqnarray}
 
where $1/\nu = \frac{1}{\sqrt{2}} - \frac{1}{\sqrt{2}}\left(\frac{1}{k+1}\right)$, and $1/\phi = \frac{\sqrt{5}}{\sqrt{8}} - \frac{3}{\sqrt{40}}\left( \frac{1}{k+1}\right)$. Let us denote $\nu^{\gamma} + \phi^{\gamma}$ by $\alpha$.\\

Thus, from (\ref{eq:sigmin}) and (\ref{eq:intsq}), SINR is above threshold if:

\begin{equation}
P_S > \frac{\beta \eta D^{\gamma}} {1- 4 \alpha \beta {\left(\frac{D}{(k+1)d}\right)}^{\gamma}  \frac{\gamma-1}{\gamma-2}}. \label{eq:sqpossible}
\end{equation}
Since the denominator must be greater than 0, we get:
\begin{equation}
{\left(\frac{D}{d}\right)}_S <  (k+1) {\left(\frac{\gamma-2}{4 \alpha \beta (\gamma-1)}\right)}^{1/\gamma}. \label{eq:sqdd}
\end{equation}
From $D \ge d$ we get, 
\begin{equation}
\beta_S \le  (k+1)^\gamma \frac{\gamma-2}{4 \alpha (\gamma-1)}. \label{eq:sqbta}
\end{equation}
Plots of the feasibility regions defined by (\ref{eq:sqdd})-(\ref{eq:sqbta}) are shown in \figurename~\ref{fig:feassq}.

\subsection{Discussions} 
The inequalities  (\ref{eq:hexpossible})--(\ref{eq:hexbta}) and (\ref{eq:sqpossible})--(\ref{eq:sqbta}) specify feasibility regions in the SINR parameter space where constant (in $|\verts|$) schedule complexity holds. As the figures show, the feasibility region admits large enough $\beta$ for typical $\gamma$ values to be useful in practice. Although the SINR parameters $\gamma$ and $\beta$ may not be assumed to be user-controllable, the free parameter $k$, however, is user-controllable and can be varied to reach a feasible operating point. Since $k$ controls spacing of neighboring simultaneously scheduled nodes, increasing $k$ can make a parameter set feasible. Furthermore, due to the step used for deriving (\ref{eq:hexdd}) and (\ref{eq:sqdd}), the transmit power can be too high at operating points close the feasibility region boundaries. As we show in the next section, increasing $k$ lowers the transmit power level. This reduction is rather dramatic for operating points that are very close to feasibility region boundary.

\section{Evaluation}\label{sec:eval}
While schedule length, determined by the free parameter $k$ is fixed, SINR varies at individual receivers due to irregular physical node placement. Let  $\rho$ be defined as 
\begin{equation}
  \rho = {\rm SINR}/\beta.\nonumber
\end{equation}
Then,  the minimum value of $\rho$, denoted by Min($\rho$), where the minimum is taken over the SINR at all receivers being equal to 1 implies power optimal operating point under uniform transmit power assignment policy. Furthermore, it is desirable that the average $\rho$, denoted by Avg($\rho$), also be close to Min($\rho$). In the following, we present Min($\rho$) as well as the ratio of Avg ($\rho$) and Min($\rho$) obtained by simulation. We generated network topologies where $D$ was normalized to 1 and the node placement irregularity was controlled by 
$D/d$, which was fixed for a given run. 
We determined the physical location of node by using a uniformly distributed random number, $u$,  between 0 and $D/d$ -1 and placing the node at a random point on the circumference of a circle of diameter $u/(2+u)$ centered at a lattice point. 

\subsection{Utilization and Transmit Power Trade-off}\label{ssec:pwr}
In this set of experiments, we started at a feasible operating point close to feasibility region boundary, and measured the variations in Min($\rho$) and the transmit power with the free parameter $k$, while keeping the values of all other SINR and topological parameters fixed. We simulated a network of 4000 nodes. Since larger $k$ implies longer schedule length, this experiment essentially studies the utilization versus transmit power and utilization versus optimality trade-offs. 

\begin{figure}[!h]
\begin{minipage}[t]{\textwidth}
\centering 
\subfigure[Hexagonal] 
{
\label{fig:fixedfrac:hex}
\includegraphics[width=\figwidth]{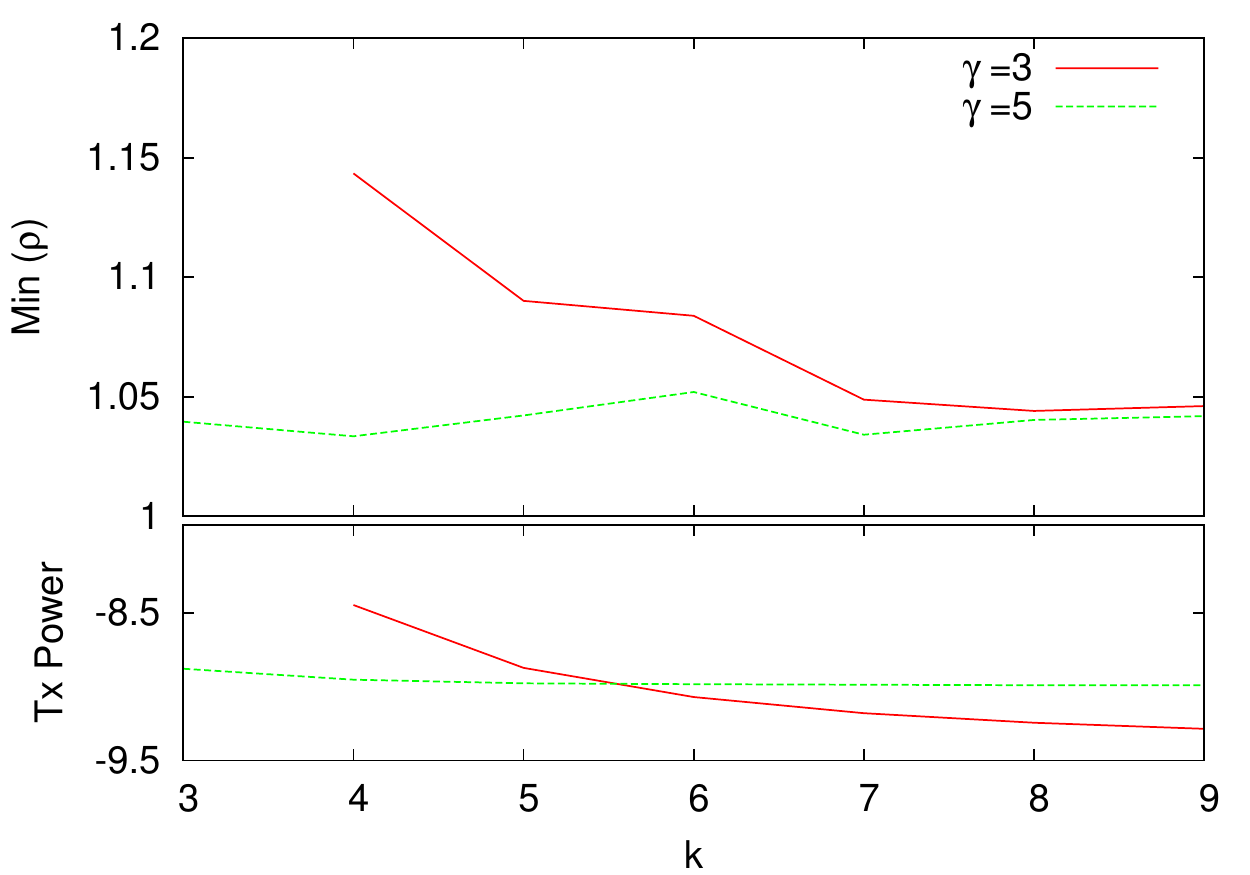}
}
\subfigure[Square-grid] 
{
  \label{fig:fixedfrac:sq}
  \includegraphics[width=\figwidth]{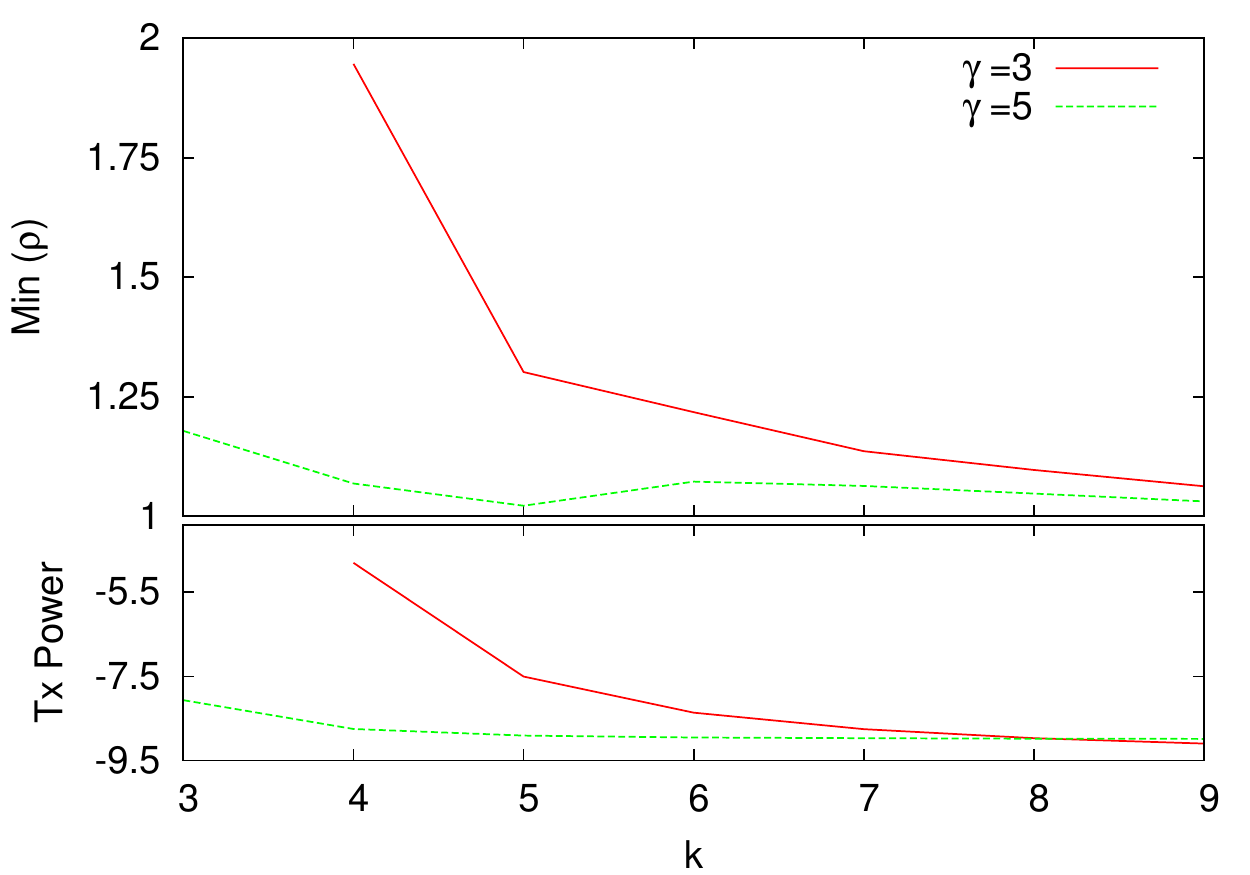}
}
\caption{Utilization vs.\ optimality and power trade-offs.}
\label{fig:fixedfrac:opt} 
\end{minipage}
\end{figure}

We found that Min($\rho$) and transmit power, both, decrease with increasing $k$ (Figs.\ \ref{fig:fixedfrac:hex}--\ref{fig:fixedfrac:sq}), and the decrease gets very steep in the vicinity of feasibility region boundary. At smaller values of $k$'s, the contribution of interference (to SINR) is strong which results in higher transmit power for smaller path-loss exponent, $\gamma$. However, as $k$ gets larger, the contribution of interference decays faster in comparison with signal strength attenuation, which shows-up as the crossing of the transmit power curves.

\subsection{Characterization of Min($\rho$) Variations}
 We used a parameter $0 \le f \le 1$ to select an operating point. The ratio $D/d$ was set to $ 1 + f* ((D/d)_{\rm max} - 1)$ and the SINR threshold was set to $f* \beta_{\rm max}$ as we varied $f$. Recall that the inequalities of the previous section determine $(D/d)_{\rm max}$ and $\beta_{\rm max}$. We repeated these experiments for several values of $\gamma$ keeping $k$ fixed to 2. The Min($\rho$) variations are shown in Fig.~\ref{fig:feasrgn:hex} for hexagonal networks and in Fig.~\ref{fig:feasrgn:sq} for square-grid networks. We then fixed $f$ to 0.5 and varied $k$. The data from these experiments are presented in Fig.~\ref{fig:k:minratio}.  Observe that in the experiment described in Sec.~\ref{ssec:pwr}, while we varied $k$, we kept the other parameters fixed, whereas in this experiment, we vary the other SINR parameters along with $k$ in order to keep the parameters $D/d$ and $\beta$ at their mid-points ($f = 0.5$).

\begin{figure}[!h]
\begin{minipage}[t]{\textwidth}
\centering 
\subfigure[Hexagonal] 
{
\label{fig:feasrgn:hex}
\includegraphics[width=\figwidth]{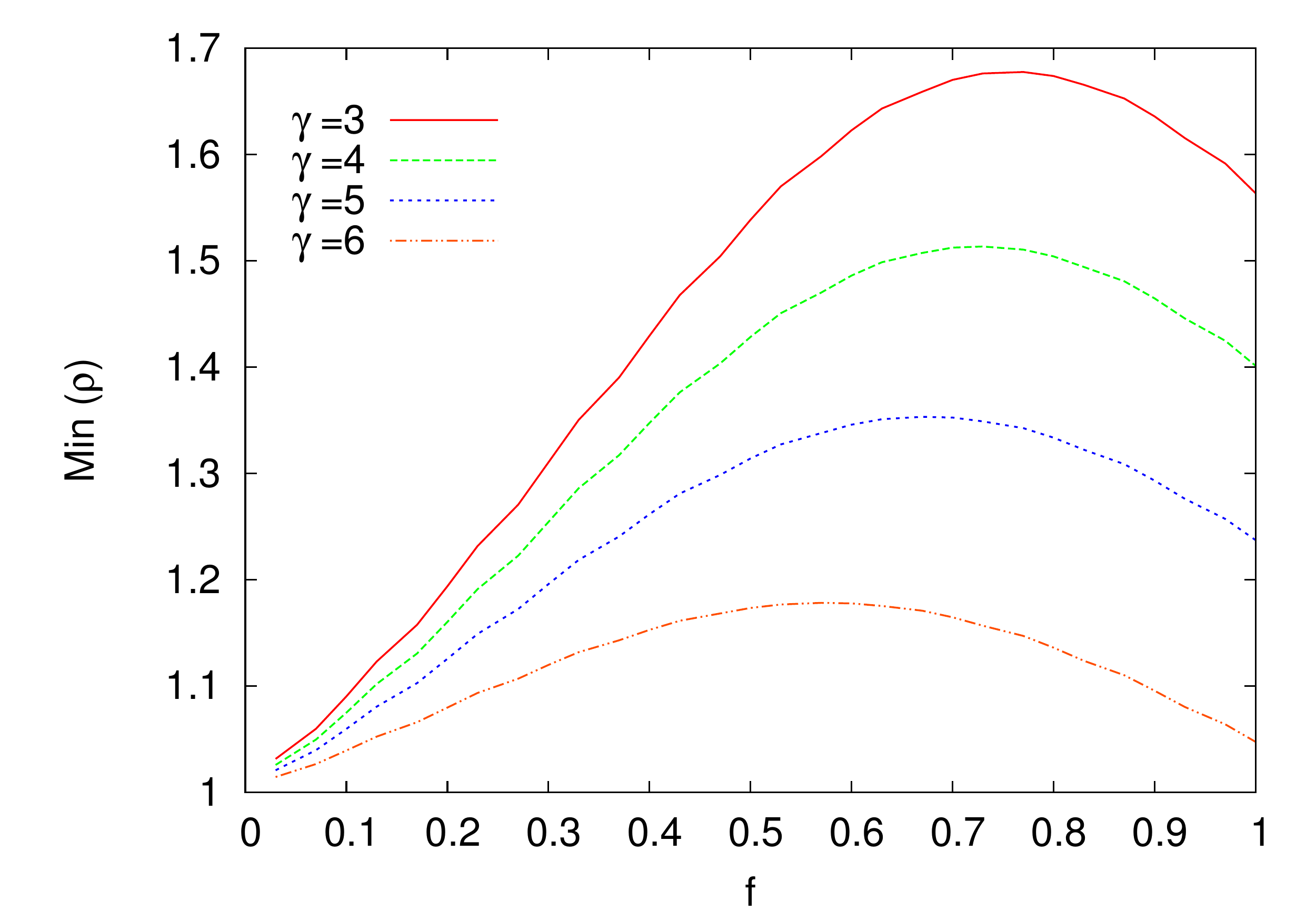}
}
\subfigure[Square-grid] 
{
  \label{fig:feasrgn:sq}
  \includegraphics[width=\figwidth]{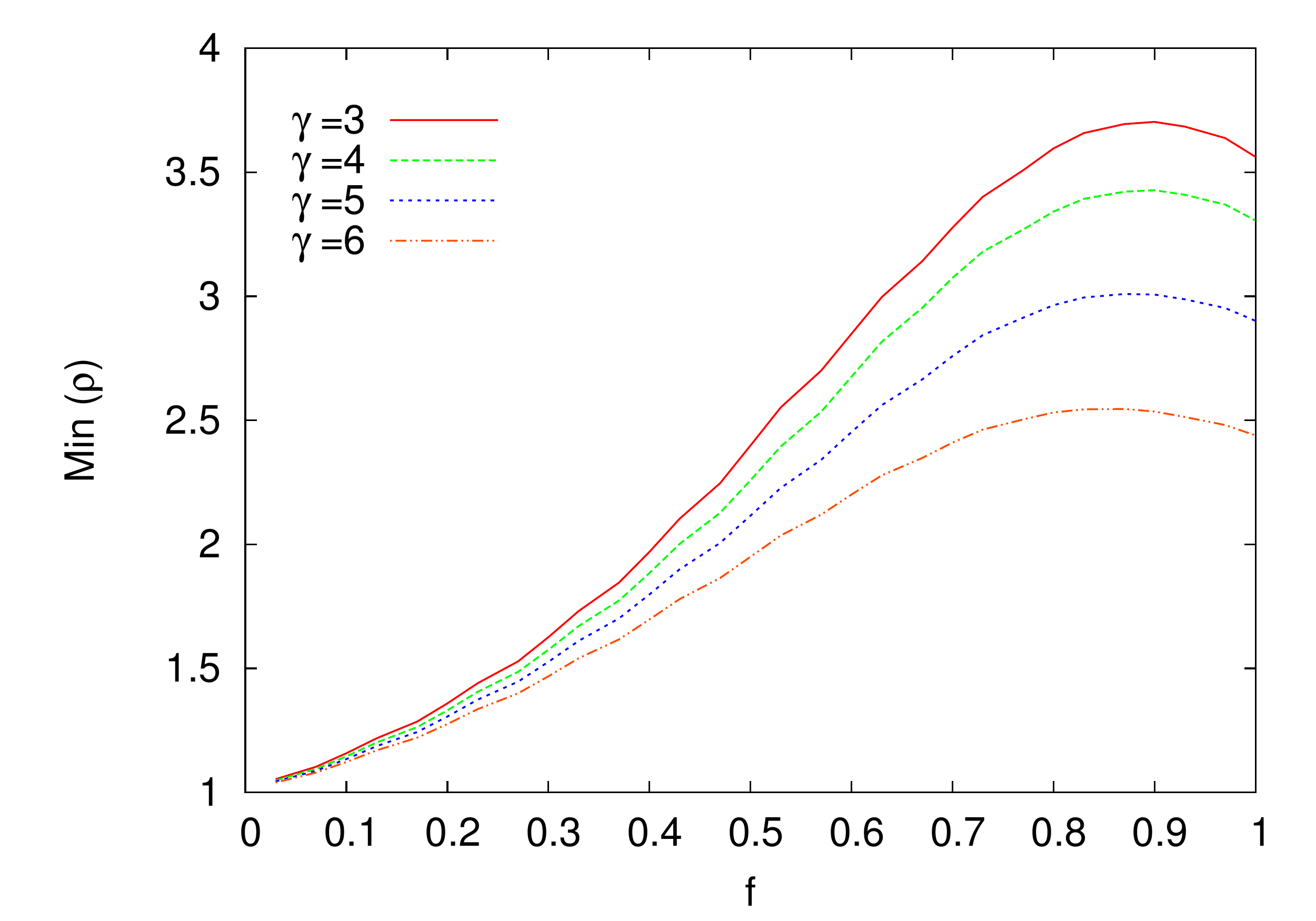}
}
\caption{Min ($\rho_i$) at $k = 2$.}
\label{fig:feasrgn:frac} 
\end{minipage}
\end{figure}

\begin{figure}[!h]
\begin{minipage}[t]{\textwidth}
\centering 
\subfigure[Hexagonal] 
{
\label{fig:k:minratio:hex}
\includegraphics[width=\figwidth]{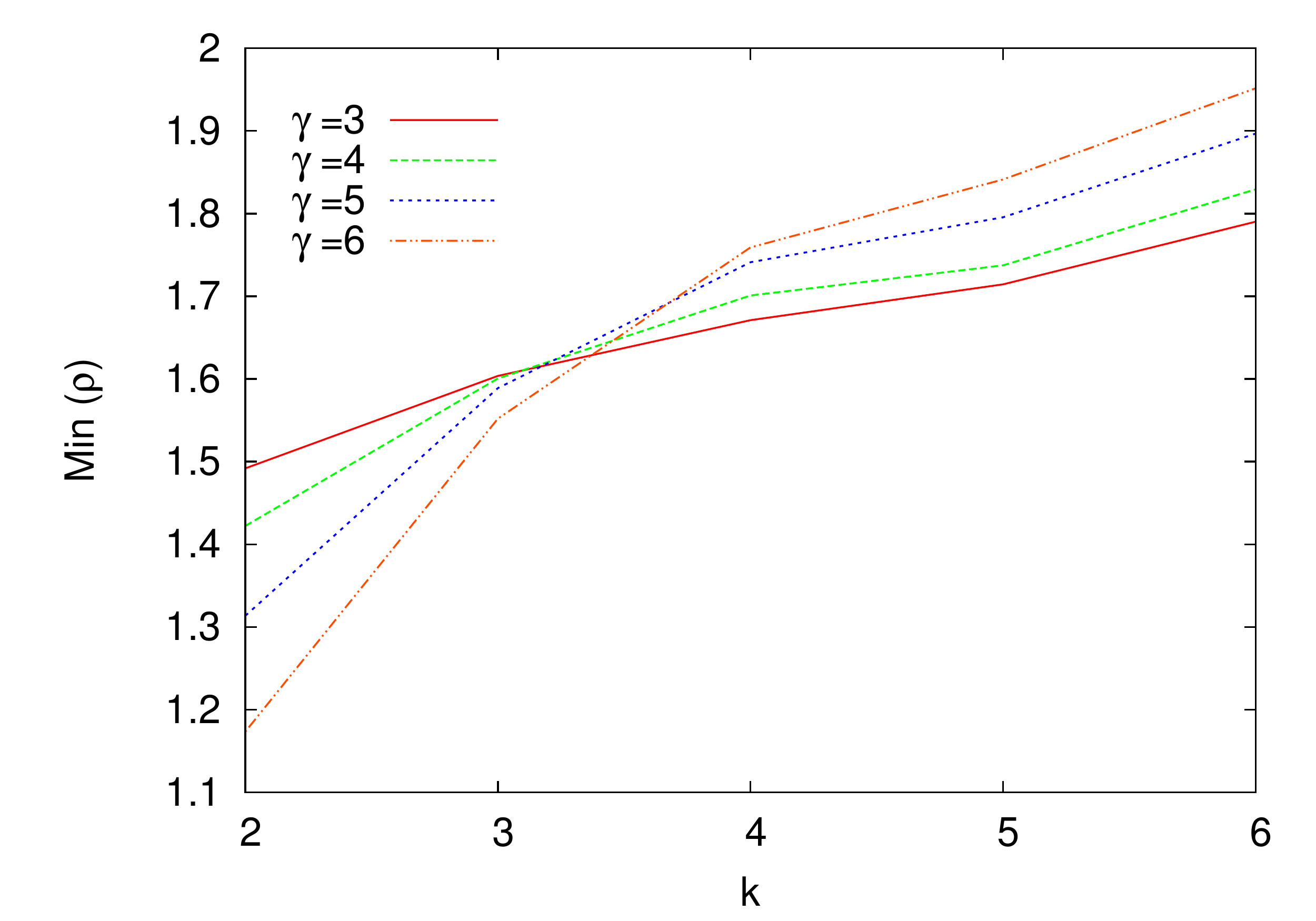}
}
\subfigure[Square-grid] 
{
  \label{fig:k:minratio:sq}
  \includegraphics[width=\figwidth]{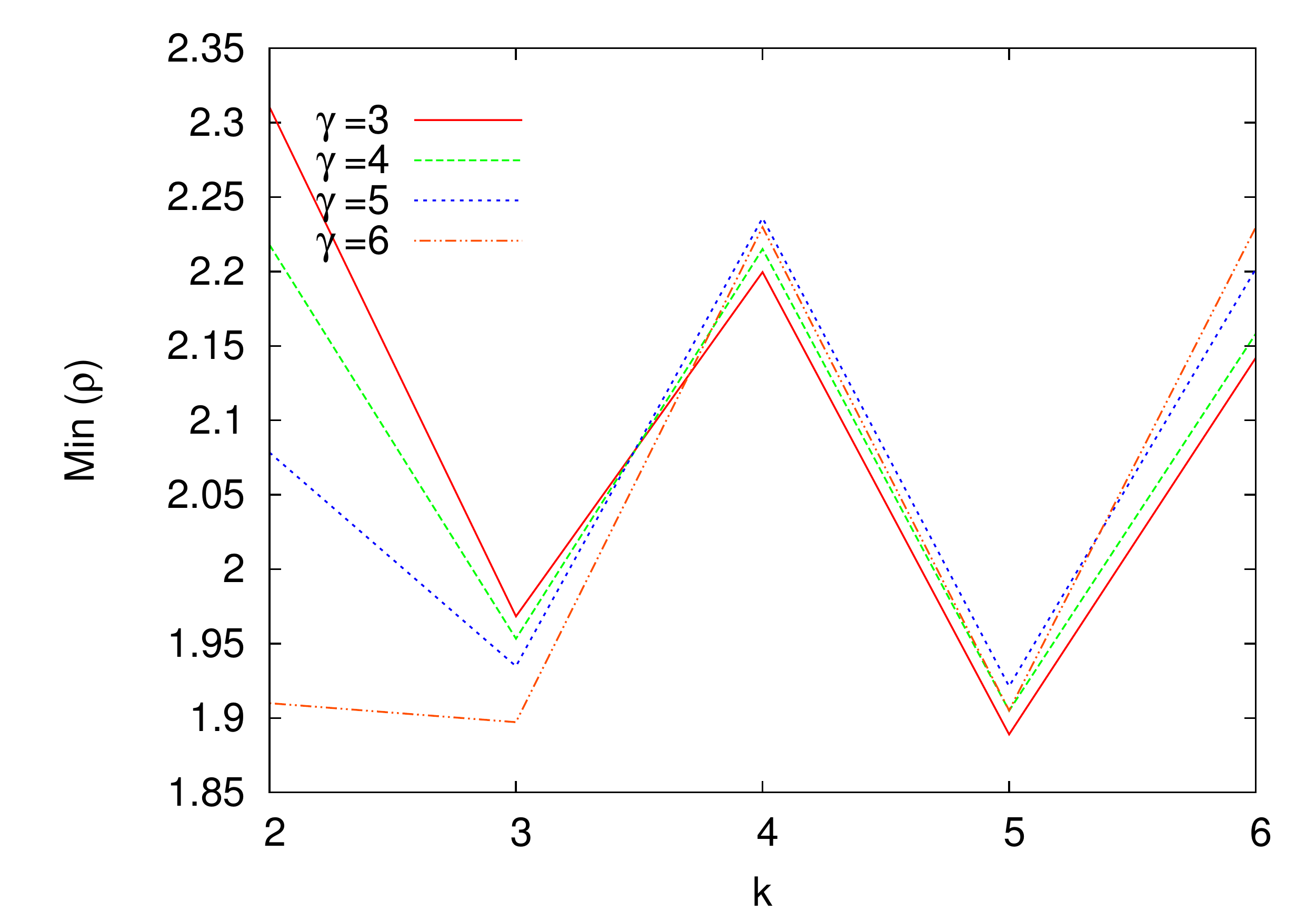}
}
\caption{Min ($\rho_i$) vs.\ $k$, $f=0.5$. }
\label{fig:k:minratio} 
\end{minipage}
\end{figure}

Besides correctness (SINR no less than the threshold: Min($\rho$)~$\ge 1$), the figures show that Min($\rho$) remains within 2 for hexagonal networks and within 4 for square-grid networks. In all cases, as we shift the operating point towards lower $f$, the performance gets better. The deviations from $\rho = 1$ are mainly due to the process of bounding the interference (\ref{eq:intsq}) and due to irregularity of node placements. As $f \rightarrow 0$, $\rho$ approaches 1, and in this part of the feasibility region, the extent of irregularity starts to vanish. The values of Min($\rho$) for square-grid networks are higher than the corresponding ones for hexagonal networks due to larger variations in the transmitter distances for square-grid networks which resulted in less tight bound~(\ref{eq:intsq}). It should be kept in mind that this performance is achieved without message passing overhead for either scheduling or power control. 


Figs.~\ref{fig:feasrgn:minav:frac}-\ref{fig:k:ratio} show the ratio Avg($\rho$) / Min($\rho$). For hexagonal networks, this ratio remains within 1.5 indicating that uniform power assignment is acceptable. The ratio remains similar for $f < 0.5$ in square-grid networks. However, as we approach the feasibility region boundary, these evaluations suggest the need for non-uniform power assignment.
\begin{figure}[!h]
\begin{minipage}[t]{\textwidth}
\centering 
\subfigure[Hexagonal] 
{
\label{fig:feasrgn:minav:hex}
\includegraphics[width=\figwidth]{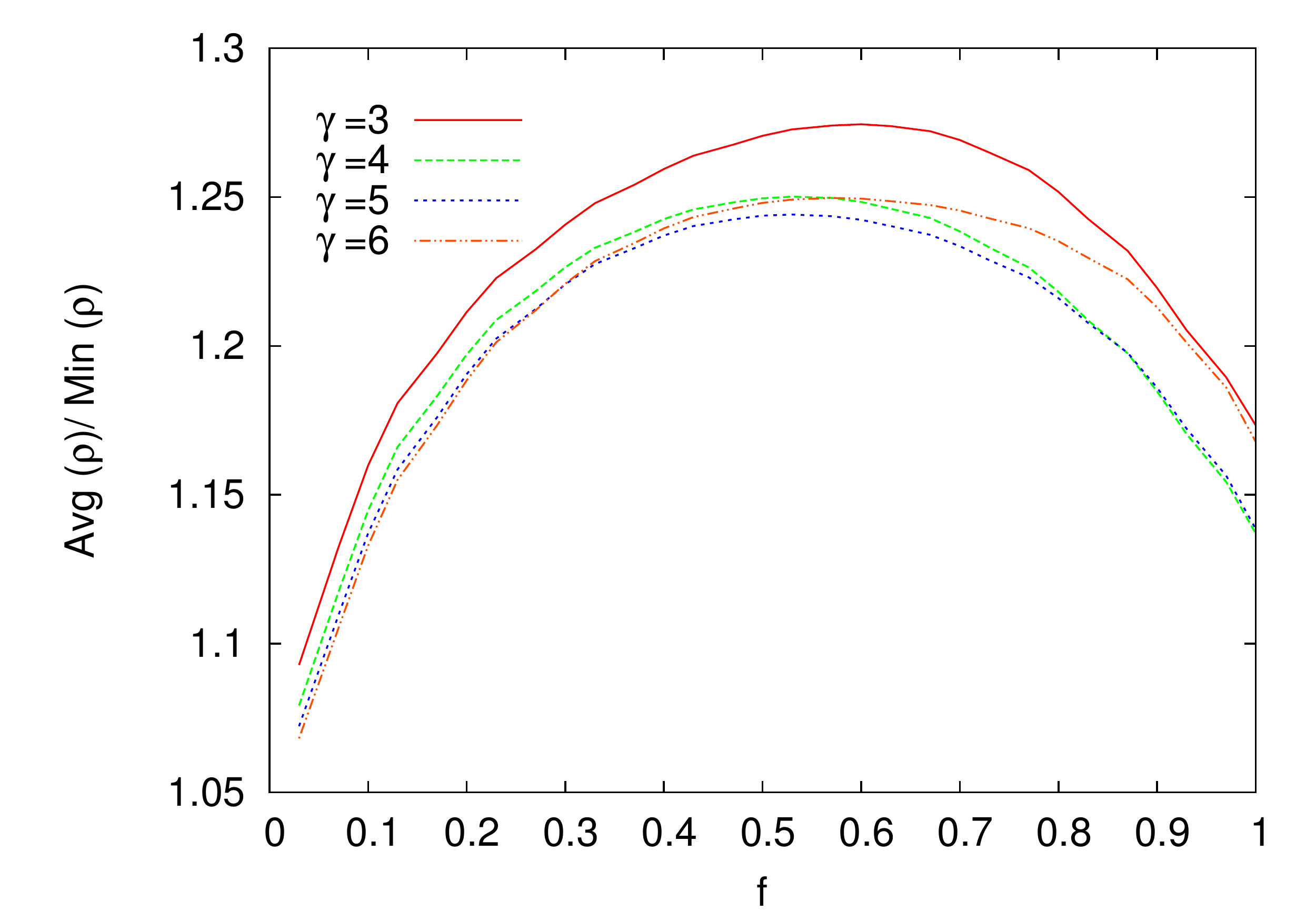}
}
\subfigure[Square-grid] 
{
  \label{fig:feasrgn:minav:sq}
  \includegraphics[width=\figwidth]{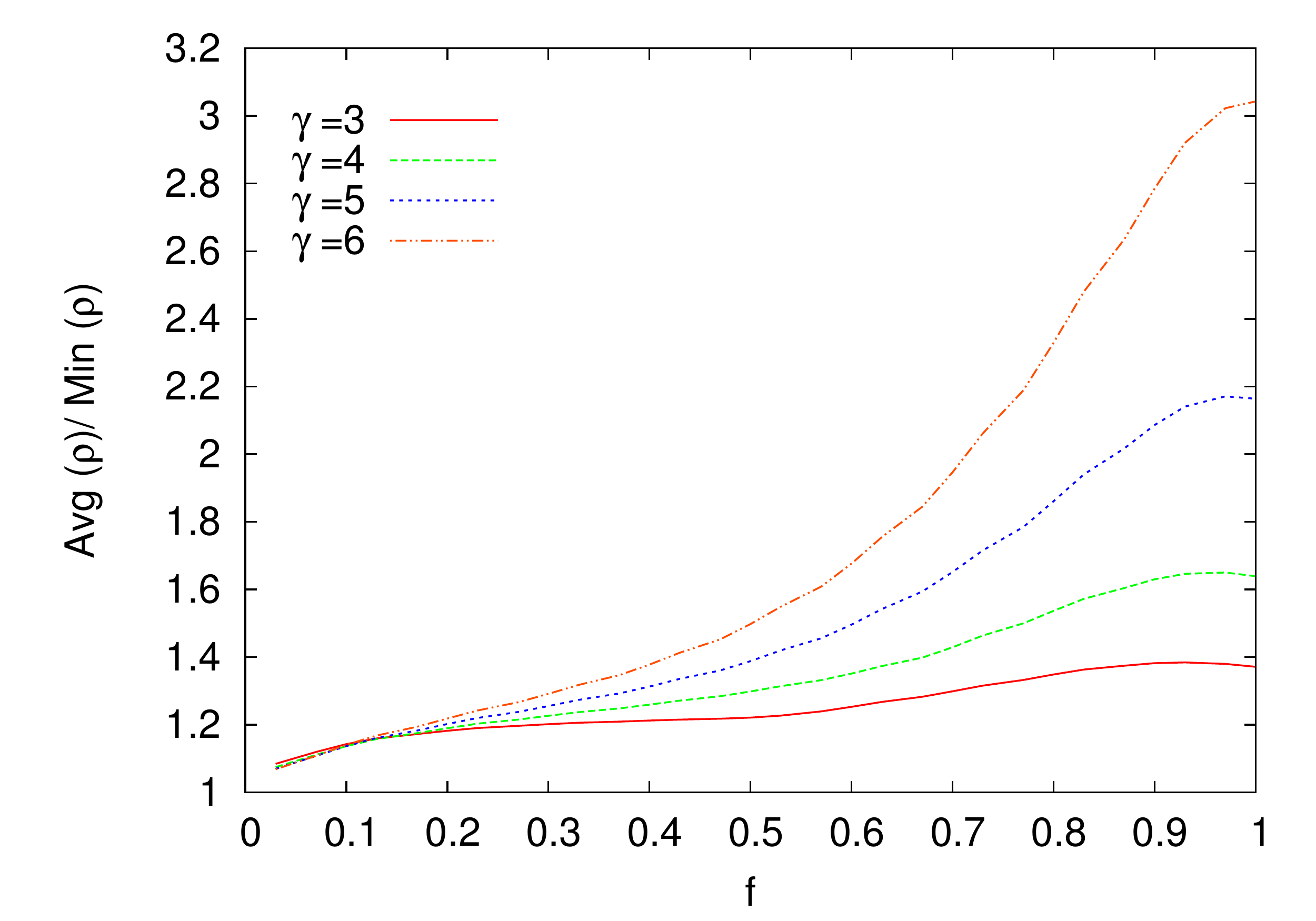}
}
\caption{Avg ($\rho_i$) / Min ($\rho_i$) at $k = 2$.}
\label{fig:feasrgn:minav:frac} 
\end{minipage}
\end{figure}

\begin{figure}[!h]
\begin{minipage}[t]{\textwidth}
\centering 
\subfigure[Hexagonal] 
{
\label{fig:k:ratio:hex}
\includegraphics[width=\figwidth]{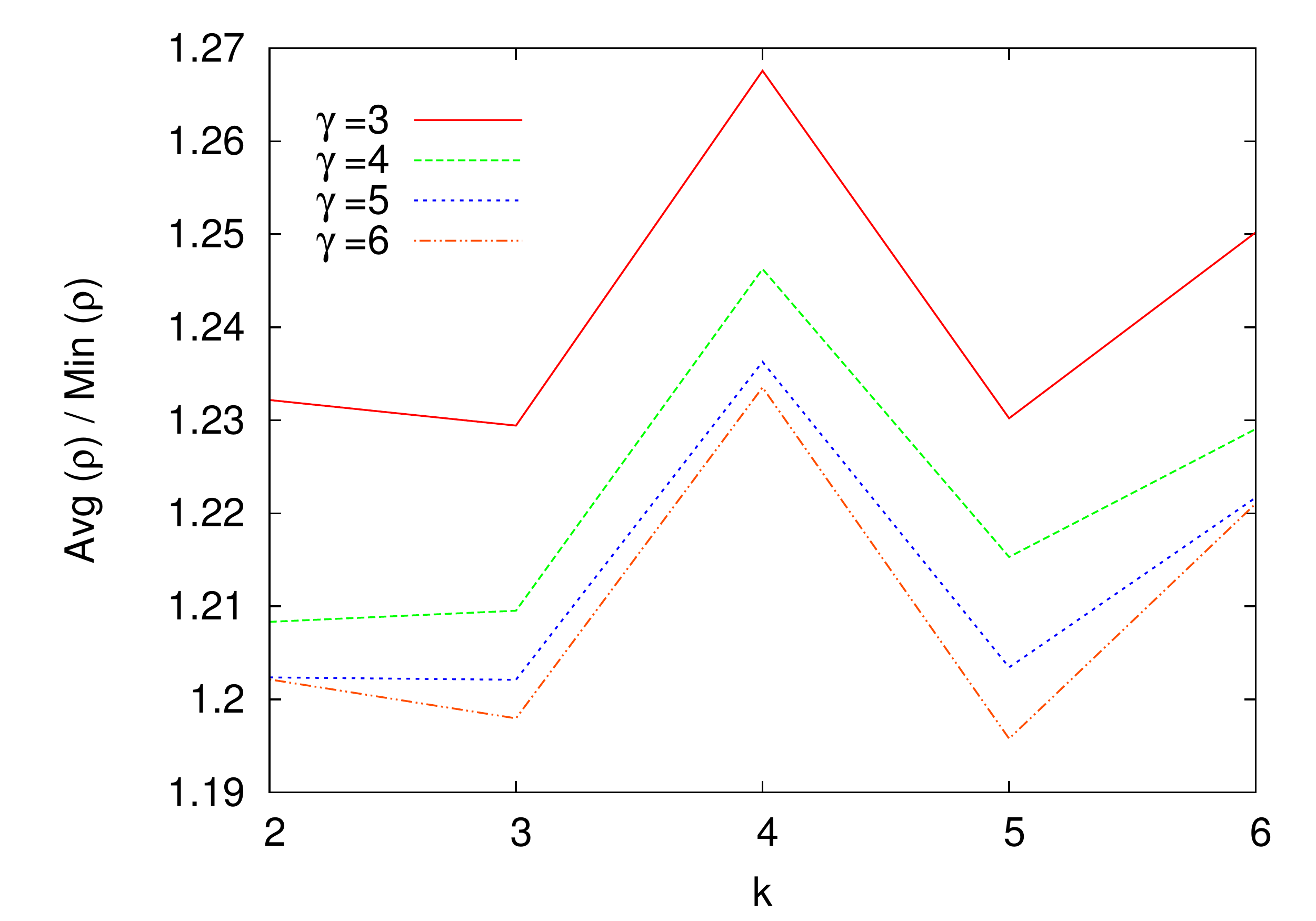}
}
\subfigure[Square-grid] 
{
  \label{fig:k:ratio:sq}
  \includegraphics[width=\figwidth]{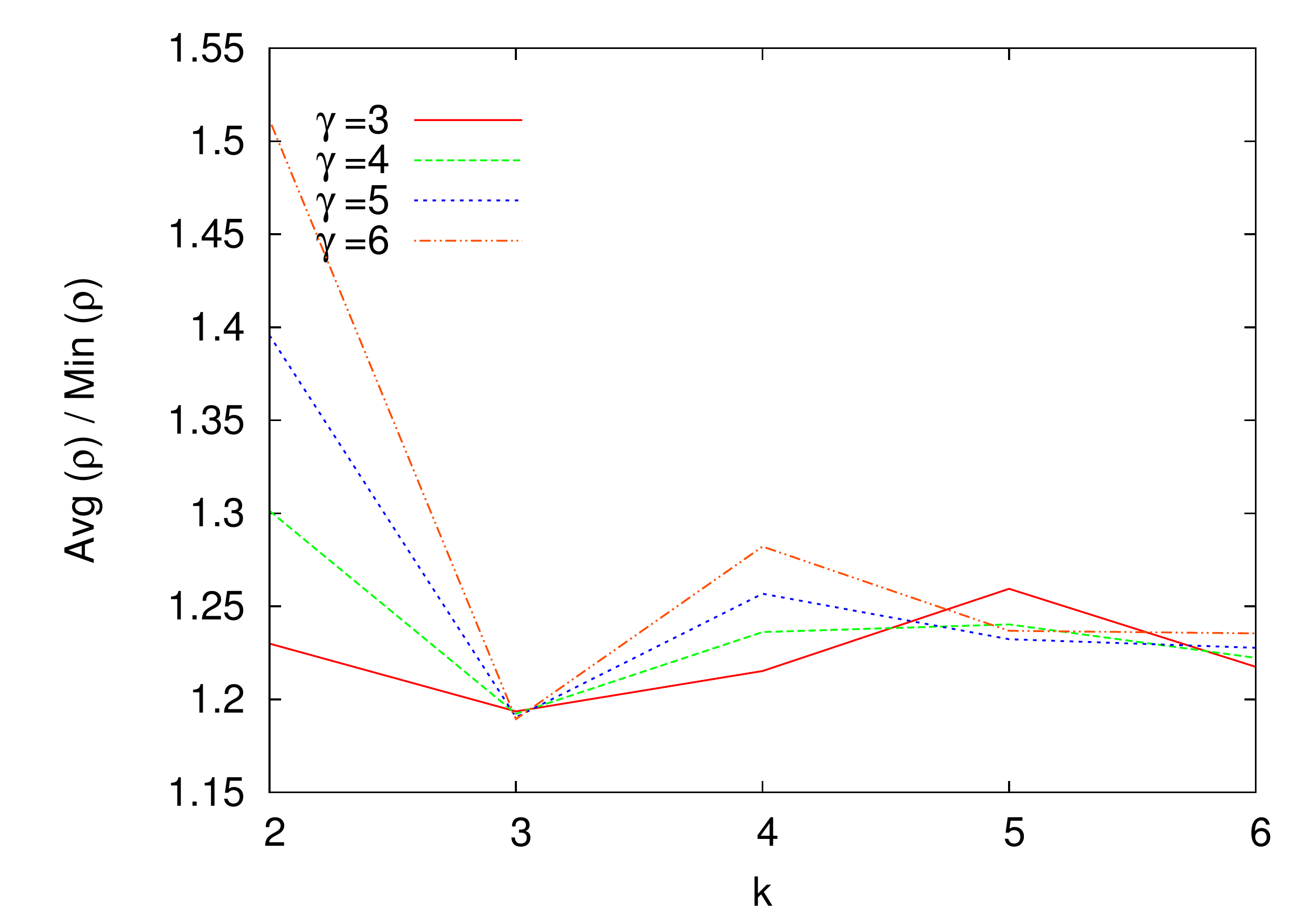}
}
\caption{Avg ($\rho_i$) / Min ($\rho_i$) vs.\ $k$, $f=0.5$. }
\label{fig:k:ratio} 
\end{minipage}
\end{figure}

\subsection{Network Size}
Finally, we measured Min($\rho$) and Avg($\rho$) with increase in network size. As Figs.~\ref{fig:opt:nodecnt:min}--\ref{fig:opt:nodecnt:av} show, both  the minimum and average $\rho$ are non-increasing with increase in scale. The small fluctuations in the second figure, Figs.~\ref{fig:opt:nodecnt:av}, is due to the hexagonal structure being only partially complete. 

\begin{figure}[!h]
\centering 
\subfigure[Min($\rho_i$)] 
{
  \label{fig:opt:nodecnt:min}
  \includegraphics[width=\figwidth]{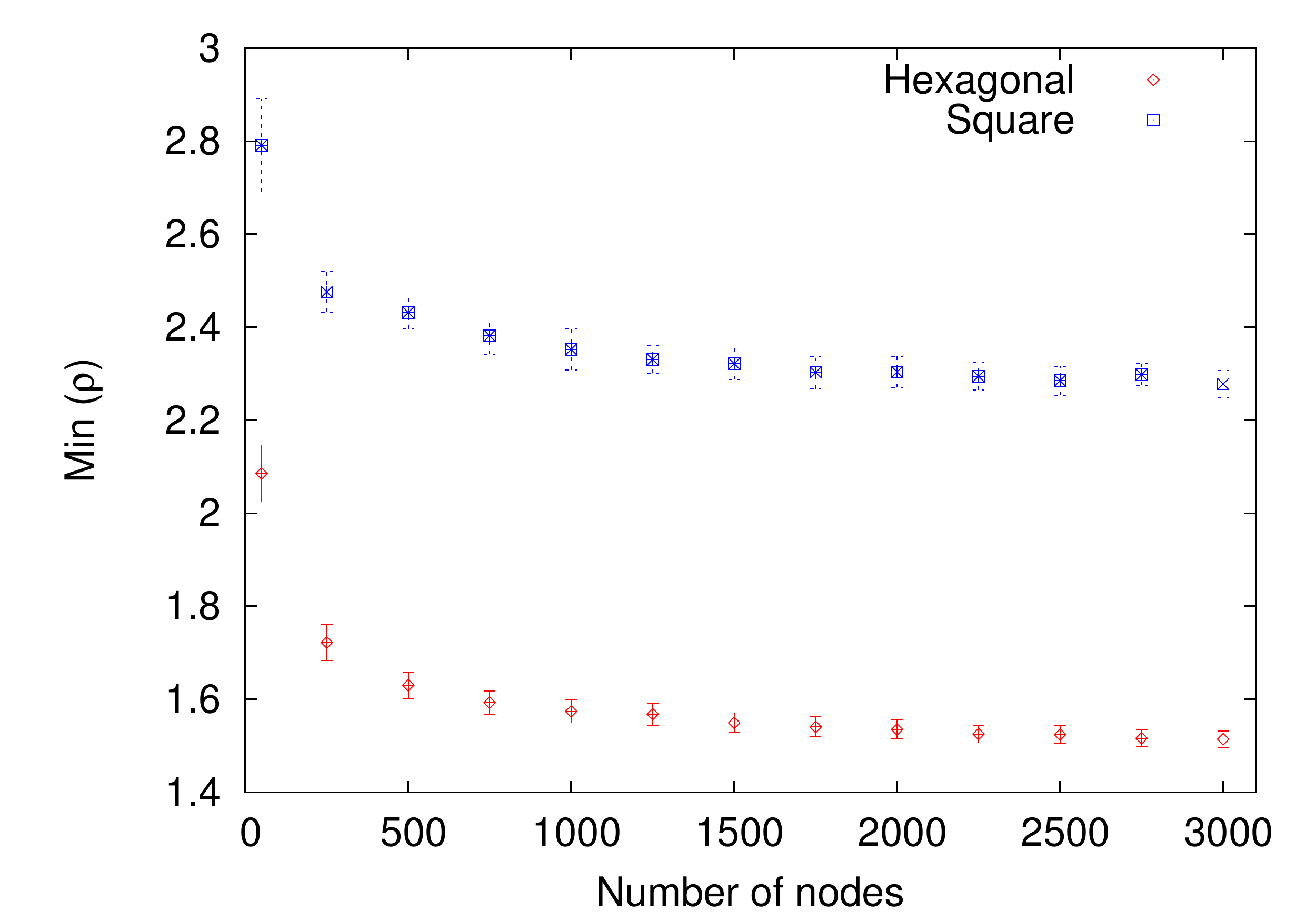}
}
\subfigure[Avg($\rho_i$)] 
{
\label{fig:opt:nodecnt:av}
\includegraphics[width=\figwidth]{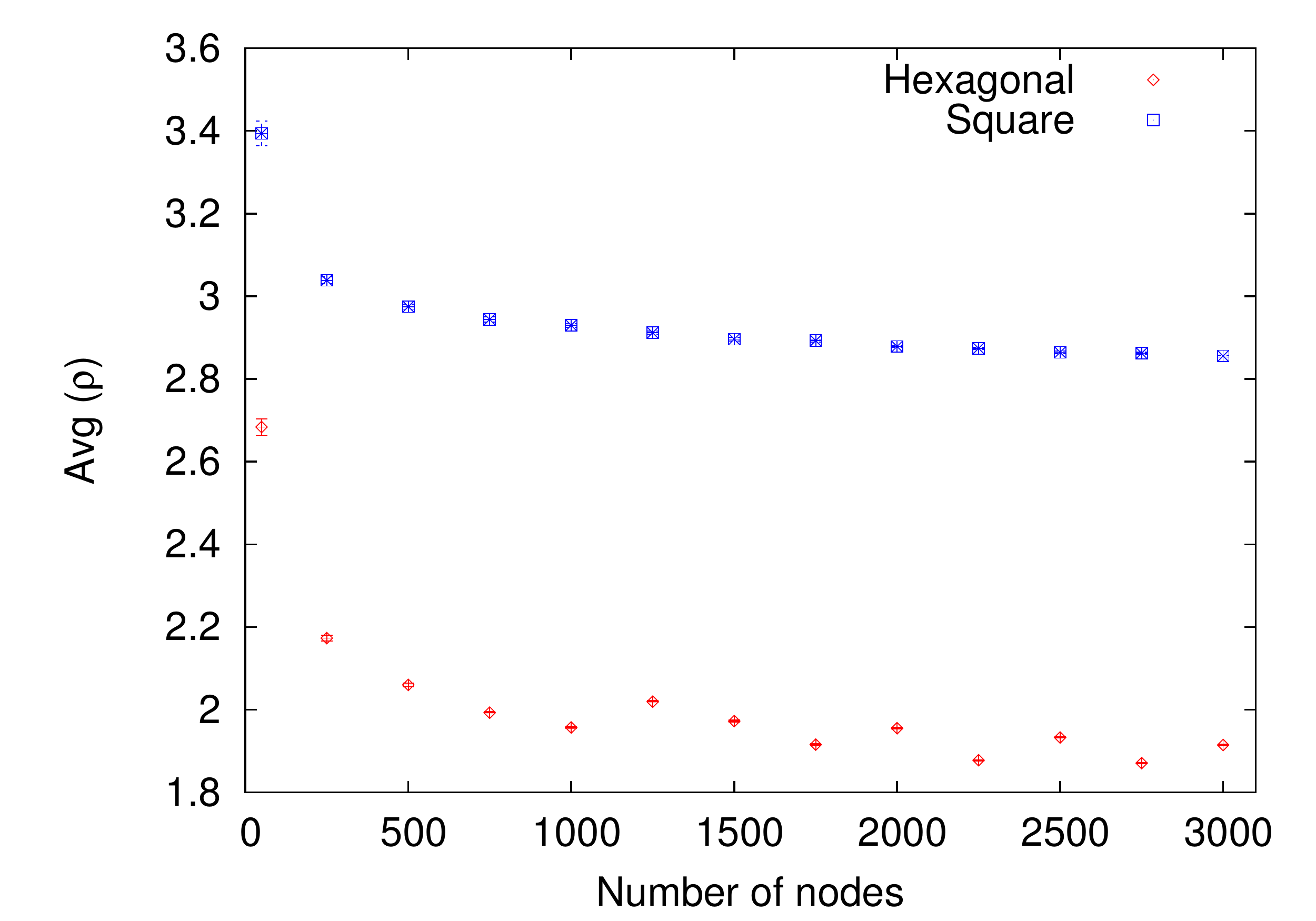}
}
\caption{$\rho$ as a function of node count: $k=2, \gamma = 3, f = 0.5$. }
\label{fig:opt:nodecnt} 
\end{figure}

  \section{Related Work}\label{sec:rel-work}

In one of the earliest works on regular wireless networks, Silvester and Kleinrock investigated the capacity of multi-hop regular topology ALOHA networks~\cite{SK83_multihop_aloha_capacity}. Recently, Mergen and Tong extended this work and analyzed the capacity of regular networks~\cite{MT05_reg_wirel_netw}. In~\cite{MR06_regular_mac}, Mangharam and Rajkumar present a MAC protocol called MAX for square-grid networks with regular node placement.  In all these, transmission and interference ranges are assumed to be the same, an assumption that we have gotten rid of in this paper. In related previous work, the author presented a distributed algorithm for convergecast in hexagonal networks~\cite{PA07_wsn_hex}, and investigated the feasibility of hexagonal backbone formation in sensor network deployments~\cite{Prabh09_hex_prototype}.

McDiarmid and Reed showed that the chromatic number of general hexagonal graphs is bounded by 4/3 times the clique number~\cite{MR00_channel_assignment_4_3_proof}. This result suggests that our hexagonal scheduling algorithm, the complexity of which is also bounded by 4/3 times the clique number, might indeed be very close to optimal. A 4/3-approximate distributed channel assignment algorithm that uses several rounds of message passing is presented in ~\cite{NS01_dist_4_3_graph_coloring_hex}.

The intractability of the minimum length scheduling problem has been established for graph-based models in~\cite{SV82_complexity_max_matching, SMS06_wless_sched_complexity} and for the SINR model in~\cite{GOW07_complexity_sinr}. The scheduling algorithm in~\cite{GOW07_complexity_sinr} as well as several other papers~\cite{GOW07_complexity_sinr,BRS10_sinr_sched,KWES10_sinr_lp_scheduling} are centralized. In~\cite{KV10_probabilistic_dist_sinr_sched, HM11_probabilistic_dist_sinr_sched}, SINR-based distributed scheduling algorithms are presented. These algorithms, however, are not collision-free and the performance guarantee is probabilistic. In~\cite{RJK10_dist_sinr_heuristic}, a heuristic is presented, but no formal analysis is offered. Gronkvist presents slot stealing strategies~\cite{Gronkvist04_sinr_sched}. Power control strategies for cellular networks are presented in~\cite{HBGG03_distr_power_assignment,HRC06_dist_power-control_sinr}.

\section{Conclusion}\label{sec:concl}
In large-scale wireless sensor networks deployments, forming a regular topology backbone is practically feasible. We presented collision-free STDMA-based distributed scheduling algorithms that do not use message passing.  In sensor networks, communication is typically event-triggered and nodes remain in sleep state much of the time to save the limited energy supply. Hence schedulers that use little or no message passing are particularly suited for sensor networks. We showed that the NSP schedule complexity of these algorithms is independent of the number of nodes in the $k$-hop interference model and within certain feasibility regions in SINR model. We  characterized these feasibility regions and saw that typical operating points are well covered by these regions. Investigation of SINR scheduling with non-uniform power assignments and of end-to-end delay are future work of interest.

\appendix
\section{Proof of Lemma 1}
\begin{proof}
 By translating the origin to $(x_1, y_1)$, the desired distance is equal to the distance between the origin and $(x_2-x_1, y_2-y_1)$. The following is a proof by induction on the distance from the origin. 

Base case: The six lattice points at unit distance form the origin are 
\begin{equation}
  (1,0), (1,1), (0,1), (-1,0), (-1,-1),(0, -1)\label{eq:1hop:hex}
\end{equation} 
which satisfy the lemma. 

General case: Let the point $(x,y)$, at distance $n$ from the origin, satisfy $MAX\{|x|,|y|,|x-y|\} = n$.
Let the set $\{(x', y')\}$ be the neighboring lattice points of $(x,y)$ at distance $n+1$ from the origin. Given that $\{(x', y')\}$ is at unit distance from $(x, y)$, $x' \in \{ x, x \pm 1\}$ and $y' \in \{ y, y \pm 1\}$. Since $(x', y')$ is a unit distance farther from origin than $(x,y)$, at-least one of the following holds: \begin{equation}
|x'| = |x| + 1, |y'| = |y| + 1.\label{eq:hind1} 
\end{equation}
However, from (\ref{eq:1hop:hex}), if both $x'$ and $y'$ differ from $x$ and  $y$ respectively, either $(x', y') = (x+1, y+1)$ or $(x', y') = (x-1, y-1)$. Thus, \begin{equation}
|x'-y'| = |x-y| \le n.\label{eq:hind2} 
\end{equation}
 Therefore, from (\ref{eq:hind1}) and (\ref{eq:hind2}), $MAX\{|x'|,|y'|,|x'-y'|\} = n+1$. Hence the lemma follows.
\end{proof}

\bibliographystyle{abbrv}
\bibliography{biblio}

\end{document}